\documentclass[journal,twoside,9pt]{IEEEtran}

\usepackage{mathrsfs}
\usepackage{amsfonts}
\usepackage{amsmath,amssymb}
\usepackage{graphicx}
\usepackage{float}
\usepackage[dvips]{epsfig}
\usepackage{subfigure}
\usepackage{balance}

\usepackage{tikz}
\usepackage{tkz-graph}

\usepackage{multicol}

\usepackage{xcolor}

\def\eqref#1{(\ref{#1})}

\usepackage{amsthm}

\theoremstyle{definition}

\newtheorem{assumption}{Assumption}
\newtheorem{theorem}{Theorem}
\newtheorem{lemma}{Lemma}

\newtheorem{remark}{Remark}

\begin{document}

%
\title{Distributed Optimal Control with Recovered Robustness for Uncertain Network Systems: A Complementary Design Approach}

\author{Zhongkui Li, Junjie Jiao, and Xiang Chen
\thanks{This work was supported in part by the National Natural Science Foundation of China under grants 619730065 and U1713223.}
\thanks{Z. Li is with the State Key Laboratory for Turbulence and Complex Systems, Department of Mechanics and Engineering Science, College of Engineering, Peking University, Beijing 100871, China. E-mail: {\tt zhongkli@pku.edu.cn}}
\thanks{J. Jiao is with the Chair of Information-oriented Control, Department of Electrical and Computer Engineering, Technical University of Munich, 80333, Munich, Germany. E-mail: {\tt junjie.jiao@tum.de}}
\thanks{X. Chen is with the Department of Electrical and Computer Engineering, University of Windsor, Canada.
E-mail: {\tt xchen@uwindsor.ca}}
}

\IEEEtitleabstractindextext{%
\begin{abstract}
This paper considers the distributed robust suboptimal consensus control problem of linear multi-agent systems,  with both   $H_2$ and $H_\infty$ performance requirements. A novel two-step complementary  design approach is proposed. In the first step, a distributed control law is designed for the nominal multi-agent system to achieve consensus with a prescribed $H_2$ performance. In the second step, an extra control input, depending on some carefully chosen residual signals indicating the modeling mismatch, is designed to complement the $H_2$ performance by providing robustness guarantee in terms of $H_\infty$ requirement with respect to disturbances or uncertainties.
The proposed complementary design approach provides an additional degree of freedom for design,  having two separate controls to deal with the $H_2$ performance and the robustness of consensus, respectively. Thereby, it    does not need to make { much} trade-off, and can be expected to be much less conservative than the trade-off design such as the mixed $H_2/H_\infty$ control method.
Besides,  this complementary approach will recover the achievable $H_2$ performance when external disturbances or uncertainties do not exist.
The effectiveness of the theoretical results and the advantages of the complementary approach are validated via numerical simulations.
\end{abstract}

\begin{IEEEkeywords}
Robust control, cooperative control, consensus, distributed control, optimal control, $H_\infty$ control, $H_2$ control.
\end{IEEEkeywords}}

\maketitle

\IEEEdisplaynontitleabstractindextext

\IEEEpeerreviewmaketitle

\section{Introduction}
Optimality and robustness are two main issues and missions in the feedback control theory \cite{zhou1998essentials}. The optimality requires an optimal or suboptimal controller to ensure that the closed-loop system satisfies certain predefined performance criteria, with the linear quadratic regulator (LQR) and the linear quadratic Gaussian (LQG)/$H_2$ problems as typical examples.
The robustness, on the other hand, characterizes the property that a system still works well in the presence of external disturbances and model uncertainties, which can be addressed in the framework of $H_\infty$ control and $\mu$ synthesis.  For   multi-agent network systems, the optimality and robustness problems encounter new inherent challenges, since the control laws need to be distributed in the sense that only local information between neighboring agents can be utilized and meanwhile the control laws are subject to structural constraints imposed by the network topology \cite{li2014cooperative}.

In the last decades, many advances have been reported on the optimality and robustness issues of network systems. Distributed optimal and suboptimal LQR problems of multi-agent systems were addressed in \cite{borrelli2008distributed,nguyen2015sub,movric2013cooperative,jiao2019distributed,jiao2019suboptimality}. The coherence and centrality of multi-agent networks were formulated and discussed in \cite{bamieh2012coherence,siami2017centrality,siami2017new,young2010robustness,patterson2014consensus,liumiao2020}, through studying the $H_2$ norms of the networks from the white noises to the performance variables.  The $H_2$ consensus problems of multi-agent systems were investigated in \cite{li2010hinf,JIAO2020109164}, where distributed state and output feedback consensus protocols were designed to satisfy prescribed $H_2$ performance indices.  Disturbance attenuation problems of linear consensus networks were studied from the $H_\infty$ control perspective in \cite{li2009h,oh2014disturbance,li2010hinf,massioni2009distributed,saboori2014h}. For multi-agent systems, both the agent dynamics and the interactions among neighboring agents could be perturbed by uncertainties. The robust synchronization and consensus problems of multi-agent systems whose agent dynamics are subject to multiplicative and coprime factor uncertainties were investigated in \cite{trentelman2013robust,lixw2018robust,jongsma2016robust}.  The authors in \cite{li2019robust,ma2014mean,li2014multi,xu2016consensusability} considered the robustness of multi-agent consensus in the sense of mean square and almost sure stabilities for the cases where the communication channels among agents are subject to multiplicative stochastic uncertainties. Robust consensus over deterministic uncertain network graphs was studied in \cite{li2017robust,7286773}.

In this paper, we will address the distributed robust optimal consensus problems for linear multi-agent systems, taking into account both the optimality and robustness at the same time.
It is well-known that there is an intrinsic conflict between optimality and robustness in the standard feedback framework \cite{zhou1998essentials,zhou2001new}.
Therefore, in the case of multi-objective design, e.g., the mixed $H_2/H_\infty$ control  \cite{zhou1994mixed} and the $H_\infty$ Guassian control \cite{chen2001multiobjective}, a trade-off has to be made between the achievable optimal performance and robustness \cite{zhou2001new}. These trade-off design approaches suffer  from a fundamental drawback of severe conservativeness,  because a single controller is developed to address the conflicting requirements simultaneously. For example, the mixed $H_2/H_\infty$ control is generally worse than the $H_\infty$ control in terms of the robustness and  worse than the LQG control in terms of the optimality \cite{chen2001multiobjective}.

The objective of this paper is to present a novel {\em non-trade-off} design approach to the robust optimal consensus problems for linear multi-agent systems. Motivated by the structure in \cite{zhou2001new}, a new design paradigm
is proposed in \cite{chen2019revisit}, consisting of an LQG control designed for the nominal plant  and an   operator $Q$ as a separate degree of freedom. The  operator $Q$ provides an extra control action to recover the robustness   performance for the closed-loop system. This new paradigm is shown to be able to avoid trade-off and to reduce the conflict between the robustness and achievable  suboptimality/optimality. It should be noted that the design paradigms in \cite{zhou2001new,chen2019revisit} are applicable to only single-agent systems. Novel non-trade-off design schemes for multi-agent systems have not witnessed  significant progress so far, due to the severe difficulties caused by the requirement of distributed control and structural uncertainties and constraints imposed by the network graphs.

In this paper, motivated by \cite{zhou2001new,chen2019revisit}, we propose a {\em complementary  design approach} to the $H_2$ and $H_\infty$ consensus problem of linear multi-agent systems. This complementary  approach consists of two steps. In the first step, we design a distributed control law for the nominal multi-agent  system, without considering the disturbances or uncertainties, to ensure that consensus is achieved with a prescribed $H_2$ performance. In the second step, a separate control input, activated by some carefully chosen residual signals indicating the modeling mismatch, will be designed to ensure robustness in terms of the $H_\infty$ requirement.
 Two cases are considered, namely, the case that relative outputs or  absolute outputs   of neighboring agents are available. Suitable residual signals are chosen for both  cases.
For  the first case with relative outputs, the residual signal is defined as the stacked error between the actual relative outputs of neighboring agents and their observed ones given by a distributed observer.
For the   second case with absolute outputs, the residual signal quantifies the differences between the actual agent dynamics and the ideal  agent dynamics. A distinct feature of this complementary approach is that the design of $H_2$ consensus control in this first step is independent of the second step and the extra control action in the second step will complement the $H_2$ performance by providing a robustness guarantee  with respect to disturbances or uncertainties.

Compared to the trade-off approach, e.g.,  the mixed $H_2/H_\infty$ control design, the proposed complementary design approach has at least two main advantages. Firstly, since   the extra control
provides an additional degree of freedom for design, the complementary approach has two separate controls to deal with the $H_2$ performance and the robustness of consensus,  respectively. Thereby, this approach does not need to make { much} trade-off, and can be expected to be much less conservative than the trade-off approach where one control tackles two conflicting performances.
Secondly, the control action of the second step is proportional to the residual signal which quantifies the modeling mismatch level, thereby having some online ``adaptivity" with respect to modeling errors.
This complementary approach will yield the same achievable $H_2$ performance when modeling mismatches do not exist.  By contrast, the trade-off approach always considers the {\it a priori} worst case and still yield the same conservative performance even when disturbances or uncertainties do not exist.

The remainder of this paper is organized as follows. Some mathematical preliminaries including graph theory and results on $H_2$ and $H_\infty$ performances are summarized in Section \ref{sec_preliminaries}.  The $H_2$ and $H_\infty$ consensus problem is formulated in Section \ref{sec_problem}. A two-step complementary design approach is proposed for the $H_2$ and $H_\infty$ consensus problem in Section  \ref{sec_twostep}.
A  simulation example that illustrates the proposed theoretical results are presented in Section \ref{sec_simu}.
Finally, conclusions are given in Section \ref{sec_conclusions}.

\section{Mathematical Preliminaries}\label{sec_preliminaries}

\subsection{Notations} The notations used in this paper is standard. $\mathbf{R}^{n\times m}$ denotes the set of  $n\times m$
real matrices, $I$ represents the identity matrix of appropriate dimension, and $\mathbf{1}$ denotes a column vector with all entries equal to 1. The matrix inequality $A>B$ means that $A$ and $B$ are symmetric matrices and $A-B$ is positive definite. For a square matrix $A$, $\mathrm{tr}(A)$ represents its trace.
$A\otimes B$ represents the Kronecker product of matrices $A$ and $B$.
The expectation operator is denoted by ${\mathbf{E}\{\cdot}\}$.

\subsection{Graph Theory}

The information flow among the agents can be conveniently modeled by a graph. An undirected graph is defined by  $\mathcal{G}=(\mathcal{V},\mathcal{E})$,
where $\mathcal{V}=\{1,\cdots,N\}$ is the set of nodes (each node represents an agent)
and $\mathcal{E}\subseteq \mathcal{V}\times\mathcal{V}$ denotes the set of  unordered pairs of nodes, called edges. 
An undirected graph is connected, if there exists a path between every two distinct nodes. For an undirected graph $\mathcal{G}$, its adjacency matrix, denoted by $\mathcal{A}=[a_{ij}]\in \mathbf{R}^{N\times N}$, is defined such that $a_{ii}=0$, $a_{ij}=a_{ji}=1$ if $(i,j)\in\mathcal {E}$ and $a_{ij}=0$
otherwise. The Laplacian matrix $\mathcal{L}=[\mathcal{L}_{ij}]\in \mathbf{R}^{N\times N}$ associated with
$\mathcal{G}$ is defined as $\mathcal{L}_{ii}=\sum_{j=1}^{N}a_{ij}$ and $\mathcal{L}_{ij}=-a_{ij}$, $i\neq j$.

\begin{lemma}[\cite{li2014cooperative}]\label{lem2}
For an undirected graph $\mathcal {G}$, zero is an eigenvalue of $\mathcal {L}$ with $\mathbf{1}$ as an
  eigenvector and all nonzero eigenvalues are positive.
Moreover, zero is a simple eigenvalue of $\mathcal {L}$ if
and only if $\mathcal {G}$ is connected.
\end{lemma}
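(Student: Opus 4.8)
The plan is to exploit the symmetry and the quadratic-form structure of $\mathcal{L}$. Since $\mathcal{G}$ is undirected, $\mathcal{A}$ is symmetric, hence so is $\mathcal{L}$, and all its eigenvalues are real. Directly from $\mathcal{L}_{ii}=\sum_{j}a_{ij}$ and $\mathcal{L}_{ij}=-a_{ij}$, each row of $\mathcal{L}$ sums to zero, i.e.\ $\mathcal{L}\mathbf{1}=0$, so zero is an eigenvalue of $\mathcal{L}$ with $\mathbf{1}$ an associated eigenvector.

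The key step is to establish positive semidefiniteness together with an exact description of $\ker\mathcal{L}$. For arbitrary $x\in\mathbf{R}^N$ I would verify the identity $x^{T}\mathcal{L}x=\tfrac12\sum_{i,j}a_{ij}(x_i-x_j)^2=\sum_{(i,j)\in\mathcal{E}}(x_i-x_j)^2\ge 0$, which shows $\mathcal{L}\ge 0$; combined with $\mathcal{L}\mathbf{1}=0$, this yields that every nonzero eigenvalue of $\mathcal{L}$ is positive. The same identity shows that $x^{T}\mathcal{L}x=0$ if and only if $x_i=x_j$ for every edge $(i,j)\in\mathcal{E}$, and since $\mathcal{L}\ge 0$ this is precisely the condition $x\in\ker\mathcal{L}$.

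For the simplicity claim I would argue both implications from this characterization of $\ker\mathcal{L}$. If $\mathcal{G}$ is connected, then $x_i=x_j$ along every edge propagates along a path joining any two nodes, forcing $x$ to be constant; hence $\ker\mathcal{L}=\mathrm{span}\{\mathbf{1}\}$ is one-dimensional and zero is a simple eigenvalue. Conversely, if $\mathcal{G}$ is disconnected with connected components $\mathcal{V}_1,\dots,\mathcal{V}_p$, $p\ge 2$, then for each $k$ the indicator vector $\chi_k$ of $\mathcal{V}_k$ (entry $1$ on $\mathcal{V}_k$, $0$ elsewhere) is constant along every edge and so lies in $\ker\mathcal{L}$; the $\chi_k$ are linearly independent, so $\dim\ker\mathcal{L}\ge 2$ and zero is not a simple eigenvalue. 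Contraposition then gives the ``if'' direction, and the forward argument gives ``only if''.

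This is a classical fact, so I do not anticipate a genuine obstacle. The only point requiring some care is deriving the quadratic-form identity $x^{T}\mathcal{L}x=\sum_{(i,j)\in\mathcal{E}}(x_i-x_j)^2$ cleanly from the entrywise definition of $\mathcal{L}$ and using it to pin down $\ker\mathcal{L}$ exactly — it is precisely this identity that converts the spectral statement into the combinatorial notion of connectedness.
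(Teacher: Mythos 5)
Your proof is correct and complete; the paper itself gives no proof of this lemma, citing it as a standard result from the referenced monograph, and your argument (row sums zero, the quadratic-form identity $x^{T}\mathcal{L}x=\tfrac12\sum_{i,j}a_{ij}(x_i-x_j)^2$, and the kernel characterization via connectivity) is exactly the classical textbook derivation. Nothing is missing.
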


\subsection{Results on $H_2$ and $H_\infty$ performances}
In this subsection, we summarize some results on $H_2$   and $H_\infty$ performances of linear systems.
 %
Consider the linear system
\begin{equation}\label{linear_sys}
	\begin{aligned}
		\dot{x} &= Ax +B w,\\
		y &= C x ,
	\end{aligned}
\end{equation}
where $x \in \mathbf{R}^n$ is the state,   $y \in \mathbf{R}^m$ is the measured output, and $w \in \mathbf{R}^q$ is the external disturbance. 

Let  $G(s)=C(sI-A)^{-1}B$ be the transfer function matrix of \eqref{linear_sys}. The $H_2$ norm of $G$  is    defined to be
	$$\| G \|_2^2=\frac{1}{2\pi}\int_{-\infty}^{\infty}{\rm{tr}} (G^\ast(j\omega)G(j\omega))d\omega.$$
We then review the following well-known result on the $H_2$    performance \cite{zhou1998essentials}.
\begin{lemma}\label{lemh2}
Let $ {\gamma}_2>0$. The following statements are
equivalent:
\begin{itemize}
\item[i)] $A$ is stable and $\|G\|_2< {\gamma}_2$.

\item[ii)]There exists   $X>0$ such that
$$AX+XA^T+BB^T<0,~\mathrm{tr}(CXC^T)< {\gamma}_2^2.$$

\item[iii)] There exist $P>0$ and $Q>0$ such that
$$\begin{aligned}
\begin{bmatrix} A^TP+PA & PB\\ B^TP & -I\end{bmatrix}<0,~\begin{bmatrix} P & C^T\\ C & Q\end{bmatrix}>0,~{\mathrm{tr}}(Q)< {\gamma}_2^2.
\end{aligned}$$
\end{itemize}
\end{lemma}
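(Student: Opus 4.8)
The plan is to establish the cycle of implications i) $\Rightarrow$ ii) $\Rightarrow$ iii) $\Rightarrow$ ii) $\Rightarrow$ i), using the Gramian characterization of the $H_2$ norm together with Schur complement manipulations. The central fact I would invoke is that when $A$ is Hurwitz the controllability Gramian $W_c = \int_0^\infty e^{At}BB^Te^{A^Tt}\,dt > 0$ is the unique solution of the Lyapunov equation $AW_c + W_cA^T + BB^T = 0$, and that $\|G\|_2^2 = \mathrm{tr}(CW_cC^T)$.

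For i) $\Rightarrow$ ii), with $A$ stable I would perturb the Lyapunov equation: let $X_\varepsilon > 0$ be the solution of $AX_\varepsilon + X_\varepsilon A^T + BB^T + \varepsilon I = 0$, so that $X_\varepsilon = W_c + \varepsilon\int_0^\infty e^{At}e^{A^Tt}\,dt$ and $AX_\varepsilon + X_\varepsilon A^T + BB^T = -\varepsilon I < 0$; since $\mathrm{tr}(CX_\varepsilon C^T) \to \|G\|_2^2 < \gamma_2^2$ as $\varepsilon \downarrow 0$, taking $\varepsilon$ small and $X = X_\varepsilon$ proves ii). For ii) $\Rightarrow$ i), the inequality $AX + XA^T + BB^T < 0$ with $X > 0$ gives $AX + XA^T < 0$, hence $A$ is Hurwitz; writing $AX + XA^T + BB^T = -R$ with $R > 0$ and integrating yields $X = W_c + \int_0^\infty e^{At}Re^{A^Tt}\,dt \ge W_c$, so $\|G\|_2^2 = \mathrm{tr}(CW_cC^T) \le \mathrm{tr}(CXC^T) < \gamma_2^2$.

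For the equivalence ii) $\Leftrightarrow$ iii), I would set $P = X^{-1} > 0$. Pre- and post-multiplying $AX + XA^T + BB^T < 0$ by $P$ gives $A^TP + PA + PBB^TP < 0$, which, via a Schur complement with respect to the $(-I)$ block, is equivalent to the first matrix inequality in iii). For the trace condition, a Schur complement with respect to the block $P$ shows that $\begin{bmatrix} P & C^T\\ C & Q\end{bmatrix} > 0$ holds if and only if $Q > CP^{-1}C^T = CXC^T$; since $\mathrm{tr}(CXC^T) < \gamma_2^2$ one may choose $Q = CXC^T + \varepsilon I$ with $\varepsilon$ small to satisfy iii), while conversely any feasible $Q$ obeys $\mathrm{tr}(CXC^T) < \mathrm{tr}(Q) < \gamma_2^2$, recovering ii). Every step in this argument is reversible, so the cycle closes.

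The Schur complement algebra and the change of variable $P = X^{-1}$ are routine. The only points requiring a little care are the two $\varepsilon$-perturbations that bridge the non-strict Gramian/Lyapunov identities and the strict inequalities in ii) and iii), together with the continuity of $\varepsilon \mapsto \mathrm{tr}(CX_\varepsilon C^T)$. I do not expect a genuine obstacle here, since this is the classical LMI characterization of the $H_2$ norm; the main thing is to keep track of which inequalities are strict and to verify that $X$, $P$, and $Q$ stay positive definite throughout.
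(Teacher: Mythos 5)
Your proof is correct: the Gramian identity $\|G\|_2^2=\mathrm{tr}(CW_cC^T)$, the $\varepsilon$-perturbation of the Lyapunov equation to pass between equality and strict inequality, and the Schur-complement/change-of-variable $P=X^{-1}$ linking ii) and iii) are all sound, and the monotonicity argument $X\ge W_c$ correctly handles ii) $\Rightarrow$ i). Note that the paper does not prove this lemma at all --- it is quoted as a standard result from Zhou and Doyle --- so there is no in-paper argument to compare against; what you have written is the classical textbook derivation of the LMI characterization of the $H_2$ norm.
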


{
Next, we review the $H_\infty$    performance of \eqref{linear_sys}.
If $A$ is stable, the $H_\infty$ norm of \eqref{linear_sys} is then defined to be
\begin{equation*}
	\|G\|_\infty = {\rm sup}_{ \omega \in \mathbb{R}}\sigma(G(j\omega)),
\end{equation*}
where $\sigma(G(j\omega))$ is the maximum singular value of $G(j\omega)$.

The following lemma presents a well-known result on  the $H_\infty$    performance \cite{zhou1998essentials,duan2006robust}.}
\begin{lemma}\label{brl}
Let
$\gamma_\infty>0$.  The following
statements are equivalent:
\begin{itemize}
\item[i)] $A$ is stable and $\|G\|_\infty<\gamma_\infty$.
\item[ii)] There exists   $X>0$ such that
$$A^TX+XA + \frac{1}{\gamma_\infty^2}XB B^TX  +C^T C<0.$$
\end{itemize}
\end{lemma}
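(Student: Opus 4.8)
This is the strict Riccati-inequality version of the bounded real lemma, so the plan is to prove the two implications separately: (ii)$\Rightarrow$(i) by an elementary dissipativity estimate, and (i)$\Rightarrow$(ii) by invoking the standard Riccati / Kalman--Yakubovich--Popov (KYP) machinery, which is where the real content lies. Throughout I use that, for a stable system, $\|G\|_\infty$ coincides with the $L_2$-induced norm of \eqref{linear_sys}.

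\emph{Step 1: (ii)$\Rightarrow$(i).} Suppose $X>0$ satisfies the stated inequality. Dropping the positive semidefinite terms $\frac{1}{\gamma_\infty^2}XBB^TX$ and $C^TC$ yields $A^TX+XA<0$, so $A$ is Hurwitz by the Lyapunov theorem. For the gain bound, take $V(x)=x^TXx$ and differentiate along $\dot x=Ax+Bw$, using the inequality:
\begin{equation*}
\dot V = x^T(A^TX+XA)x+2x^TXBw < -\tfrac{1}{\gamma_\infty^2}\|B^TXx\|^2-\|y\|^2+2x^TXBw .
\end{equation*}
Completing the square, $-\tfrac{1}{\gamma_\infty^2}\|B^TXx\|^2+2x^TXBw=\gamma_\infty^2\|w\|^2-\tfrac{1}{\gamma_\infty^2}\|B^TXx-\gamma_\infty^2 w\|^2\le\gamma_\infty^2\|w\|^2$, so $\dot V<\gamma_\infty^2\|w\|^2-\|y\|^2$. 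Integrating from $t=0$ with $x(0)=0$ and using $V\ge 0$ gives $\int_0^\infty\|y\|^2\,dt\le\gamma_\infty^2\int_0^\infty\|w\|^2\,dt$ for every $w\in L_2$. Since the matrix inequality is strict, by continuity it still holds with $\gamma_\infty$ replaced by some slightly smaller $\gamma'>0$; repeating the estimate then gives $\|G\|_\infty\le\gamma'<\gamma_\infty$.

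\emph{Step 2: (i)$\Rightarrow$(ii).} Since $G(s)=C(sI-A)^{-1}B$ is strictly proper and $A$ is stable, $\|G\|_\infty<\gamma_\infty$ is equivalent to the frequency-domain inequality $G^\ast(j\omega)G(j\omega)<\gamma_\infty^2 I$ for all $\omega\in\mathbb{R}$. Applying the strict KYP lemma with the multiplier $\mathrm{diag}(C^TC,\,-\gamma_\infty^2 I)$ produces a symmetric $X$ with
\begin{equation*}
\begin{bmatrix} A^TX+XA+C^TC & XB\\ B^TX & -\gamma_\infty^2 I\end{bmatrix}<0 ,
\end{equation*}
and the Schur complement with respect to the $(2,2)$-block $-\gamma_\infty^2 I<0$ is exactly $A^TX+XA+\frac{1}{\gamma_\infty^2}XBB^TX+C^TC<0$. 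This last inequality again forces $A^TX+XA<0$, i.e.\ $X$ solves a Lyapunov equation $A^TX+XA=-Q$ with $Q>0$ and $A$ Hurwitz, whence $X=\int_0^\infty e^{A^Tt}Qe^{At}\,dt>0$. (Alternatively, avoiding KYP: the norm bound makes the Hamiltonian matrix $\left[\begin{smallmatrix}A & \gamma_\infty^{-2}BB^T\\ -C^TC & -A^T\end{smallmatrix}\right]$ free of imaginary-axis eigenvalues, hence the algebraic Riccati equation $A^TX_0+X_0A+\frac{1}{\gamma_\infty^2}X_0BB^TX_0+C^TC=0$ admits a stabilizing solution $X_0\ge 0$; then $X=X_0+\delta P$, with $P>0$ a Lyapunov certificate for the Hurwitz matrix $A+\frac{1}{\gamma_\infty^2}BB^TX_0$ and $\delta>0$ small, satisfies the strict inequality and is positive definite.)

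\emph{Main obstacle.} The only nontrivial point is the \emph{existence} assertion in Step 2 --- extracting some symmetric solution of the KYP LMI (equivalently, the stabilizing solution of the $H_\infty$ Riccati equation) from the frequency-domain gain bound; this is the classical core of the bounded real lemma, and I would cite it from \cite{zhou1998essentials,duan2006robust}. Everything afterwards --- the Schur complement, the positivity of $X$, and, in the Riccati-equation route, the perturbation to a strictly positive definite solution of the strict inequality --- is routine.
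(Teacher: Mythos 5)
The paper does not actually prove this lemma: it is stated as the well-known (strict) bounded real lemma and simply cited from \cite{zhou1998essentials,duan2006robust}. Your argument is correct and is essentially the standard textbook proof those references contain --- the dissipation estimate with storage function $x^TXx$ plus completion of squares for (ii)$\Rightarrow$(i), and the KYP/Hamiltonian--Riccati machinery (with the routine perturbation $X=X_0+\delta P$ to obtain strictness and positive definiteness) for (i)$\Rightarrow$(ii) --- so there is no divergence from the paper to discuss and no gap to report.
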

In the next section, we will formulate the problem { to be addressed  in this paper.}

%

\section{Formulation of $H_2$ and $H_\infty$ Consensus Problem}\label{sec_problem}

Consider a network of $N$ identical linear agents subject to different noises and external disturbances. The dynamics of the $i$-th
agent are described by
\begin{equation}\label{d1}
\begin{aligned}
    \dot{x}_i &=Ax_i+B_0w_{0i}+B_1w_{i}+B_2u_i,\\
    y_i &=C_2x_i+D_0w_{0i}+D_1w_{i},\quad i=1,\cdots,N,
\end{aligned}
\end{equation}
where $x_i\in\mathbf{R}^n$ is the state,
$u_i\in\mathbf{R}^{p}$ is the control input, and $y_i\in\mathbf{R}^{m_1}$ is the measurement output
of the $i$-th agent, respectively.
In \eqref{d1},  $w_i \in \mathbf{R}^{q_1}$ denotes the external disturbance signal, representing the modeling uncertainty and/or unmodeled dynamics of the $i$-th agent,   $w_{0i} \in \mathbf{R}^{q_2}$ is a white noise signal with $\mathbf{E}\{w_{0i}(t)\} = 0$ and $\mathbf{E}\{w_{0i}(t)w_{0i}(\tau)^T\} = \delta(t -\tau)I$.
The matrices $A$, $B_0$, $B_1$, $B_2$, $C_2$, $D_0$ and $D_1$ are of suitable dimensions. The pair $(A, B_2)$ is assumed to be stabilizable and the pair $(C_2, A)$ is assumed to be detectable.
The communication graph among the $N$ agents is represented by an undirected graph $\mathcal{G}$.

The agents in \eqref{d1} are said to achieve consensus if there exist control laws $u_i$ such that, given $w_{0i} =0$ and $w_i =0$, $x_i - x_j \to 0$ as $t \to \infty$ for all $i,j = 1,\ldots,N$. In this paper, output
variables $z_i$, $i=1,\cdots,N$, as defined in \eqref{pv} (see \cite{li2010hinf,li2009h,li2014cooperative} also), are adopted to quantify the consensus performance,

\begin{equation}\label{pv}
z_i=\frac{1}{N}\sum_{j=1}^N C_1(x_i-x_j), \quad i=1,\cdots,N,
\end{equation}
where $z_i\in\mathbf{R}^{m_2}$, and $C_1\in\mathbf{R}^{m_2\times n}$ is a given constant weighting matrix. Note that other candidate performance variables could also be applied, for example, those depending on the specific network topology $\mathcal{G}$ as in \cite{JIAO2020109164,JIAO2021104872}.

Let $x=[x_1^T,\cdots,x_N^T]^T$, $w_0=[w_{01}^T,\cdots,w_{0N}^T]^T$, $w=[w_{1}^T,\cdots,w_{N}^T]^T$, $u=[u_1^T,\cdots,u_N^T]^T$, $y=[y_1^T,\cdots,y_N^T]^T$, and
$z=[z_1^T,\cdots,z_N^T]^T$.
Let $T_{w z}(s)$ and $T_{w_0 z}(s)$ denote the closed-loop transfer function matrices from $w$ to $z$ and from $w_0$ to $z$, respectively, under feedback control laws $u_i$. The following main problem to be addressed in this paper can then be formulated:

\begin{itemize}
\item[]{\bf Main Problem:} For the multi-agent system in \eqref{d1} and \eqref{pv}, given constants ${\gamma}_2>0$ and ${\gamma}_\infty>0$, find feedback control laws $u_i$ such that $\|T_{w_0z}(s)\|_2< {\gamma}_2$ and $\|T_{w z}(s)\|_\infty< {\gamma}_\infty$ and the agents in \eqref{d1} achieve consensus, that is, $x_i - x_j \to 0$ as $t \to \infty$ for all $i,j = 1,\ldots,N$ if both $w_{0i} =0$ and $w_i =0$.
\end{itemize}

We shall call this formulation `$H_2$ and $H_\infty$ Consensus Problem', referring to $\|T_{w_0z}\|_2< {\gamma}_2$ as $H_2$ consensus and $\|T_{w z}\|_\infty< {\gamma}_\infty$ as $H_\infty$ consensus, respectively.

It is well-known that $H_2$ and $H_\infty$ performance are inherently conflicting \cite{zhou1998essentials}. To the best of our knowledge so far, there has been no effective solution to the consensus problem for multi-agent systems that could guarantee non-compromised $H_2$ and $H_\infty$ performance. In the present paper, we will propose a novel {\em non-trade-off complementary  design}  for obtaining distributed control laws that address the said $H_2$ and $H_\infty$ consensus problem.
The proposed design  contains two  steps.
In the first step, a distributed  control law is proposed, which achieves $H_2$ consensus for the controlled multi-agent system.
In the second step, an extra distributed  control law is designed which achieves $H_\infty$ consensus for the overall network.
In particular, we will  provide two design methods for obtaining such distributed control laws that solve the  $H_2$ and $H_\infty$   consensus problem, based on relative output feedback and absolute output feedback, respectively.

\begin{remark}
One method to solve the $H_2$ and $H_\infty$  consensus problem is the standard trade-off  mixed $H_2/H_\infty$ design \cite{ao2020robust,sheng2017output}, i.e., using one single control law such that  both performance criteria are satisfied.
However,  it is well understood that there is an intrinsic conflict between the $H_2$  performance and  $H_\infty$ robustness in the  mixed $H_2/H_\infty$ design \cite{zhou1998essentials,zhou2001new}.

Note that the complementary design can be expected to be much less conservative than the trade-off approach, as it has two separate controls to deal with the $H_2$ performance and the $H_\infty$ robustness of consensus,  respectively.
\end{remark}

\section{A Two-step Complementary  Approach to the Distributed $H_2$ and $H_\infty$ Control Problem}\label{sec_twostep}

{ In this section,  we will provide two design methods for  obtaining  distributed control laws  that  solve the $H_2$ and  $H_\infty$ consensus problem with the proposed non-trade-off complementary  approach, based on {\em relative output feedback} and   {\em absolute output feedback}, respectively.}

\subsection{Relative Output Feedback Case}\label{subsec_output_feedback}

In this subsection, we consider the case where only the {\em relative output} information of the neighboring agents is accessible to each agent. In this case, the structure of the proposed complementary  design  is depicted in Fig.  \ref{fig2}.

\begin{figure}[htbp]
\centering
\includegraphics[width=0.8\linewidth]{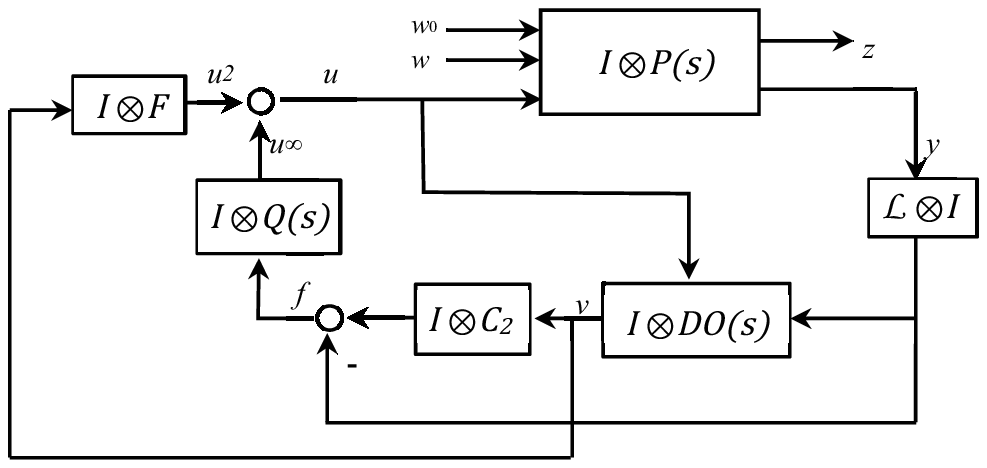}
\caption{The controller structure of the complementary  design based on  relative outputs.  In this structure, $v=[v_1^T,\cdots,v_N^T]^T$, $u_2=[u_{21}^T,\cdots,u_{2N}^T]^T$, $u_\infty=[u_{\infty1}^T,\cdots,u_{\infty_N}^T]^T$, $P(s)$ denotes  the agent  dynamics in \eqref{d1}, $DO(s)$ represents the distributed observer for each agent, with $v_i$ as the protocol state, $f=[f_1^T,\cdots,f_N^T]^T$ is the residual signal,  and $Q(s)$ is the extra controller  to compensate for $w$. The rest variables are defined as in Section \ref{sec_problem}.}\label{fig2}
\end{figure}

\subsubsection{Step One}
In the first step, we consider the   $H_2$ consensus problem     for the case with nominal agent dynamics, i.e., we consider only the noise $w_{0i}$ (without considering   external disturbances $w_i$).  Relying on the relative output information of neighboring agents, we employ the following distributed observer-based protocol \cite{trentelman2013robust,li2014cooperative}:
\begin{subequations}\label{comdiso1}
\begin{equation}\label{comdiso1a}
\begin{aligned}
\dot{v}_i  &=(A-GC_2)v_i+\sum_{j=1}^Na_{ij}\Big(B_2F( v_{i}-v_{j})+G(y_{i}-y_{j})\Big),\\
\end{aligned}
\end{equation}
\begin{equation}\label{comdiso1b}
\begin{aligned}
u_{2i} &=Fv_i,\quad i=1,\cdots,N,
\end{aligned}
\end{equation}
\end{subequations}
where $v_i\in\mathbf{R}^n$ is the protocol state, $u_{2i}$ is the input of the $i$-th agent in this step,  $F$ and $G$ are the feedback gain matrices to be designed. The coefficient  $a_{ij}$ is the $ij$-th entry of the adjacency matrix of the communication graph among the agents.

Since in the this step, we only take care of the influence of the noise $w_{0i}$ on the performance outputs $z_i$, we consider only the {\em outer loop}  in Fig.  \ref{fig2}. The control input $u_i$ of agent $i$ in this case is equal to $u_{2i}$, with $u_{\infty i}=0$.  Define the error variables
\begin{equation}\label{error_e}
	e_i\triangleq v_i-\sum_{j=1}^Na_{ij}(x_{i}-x_{j}), \quad i=1,\cdots,N.
\end{equation}
We then have
\begin{equation}\label{comdiso2}
\begin{aligned}
\dot{e}_i=(A-GC_2)e_i+(GD_0-B_0)\sum_{j=1}^Na_{ij}(w_{0i}-w_{0j}).
\end{aligned}
\end{equation}
Therefore, if $G$ is chosen such that $A-GC_2$ is Hurwitz, $v_i$ in \eqref{comdiso1} is  actually an estimate of $\sum_{j=1}^Na_{ij}(x_{i}-x_{j})$ for agent $i$.  That is, $DO(s)$ in Fig. \ref{fig2} is in fact represented by the distributed observer  in \eqref{comdiso1a}.

Denote  $v=[v_1^T,\cdots,v_N^T]^T$ and  $\xi=\begin{bmatrix} x^T & v^T\end{bmatrix}^T$. By substituting \eqref{comdiso1} into \eqref{d1},  the closed-loop network dynamics can then be written in compact form as
\begin{equation}\label{netcd1}
\begin{aligned}
    \dot{\xi}&=\mathscr{A}\xi+\mathscr{B}_0w_0, \\
    z & = \mathscr{C}_1\xi ,
\end{aligned}
\end{equation}
where
\begin{equation}\label{netcd10}
\begin{aligned}
\mathscr{A} &=\begin{bmatrix} I\otimes A  & I\otimes B_2F \\ \mathcal {L}\otimes GC_2 & I\otimes (A-GC_2)+\mathcal {L}\otimes B_2 F \end{bmatrix},\\
\mathscr{B}_0&=\begin{bmatrix} I\otimes B_0 \\ \mathcal {L}\otimes GD_0\end{bmatrix},
\mathscr{C}_1=\mathcal {M}\otimes \begin{bmatrix} C_1 & 0\end{bmatrix},
 \mathcal {M} \triangleq I - \frac{1}{N}\mathbf{1}\mathbf{1}^T.
\end{aligned}
\end{equation}


The following theorem provides a necessary and sufficient condition for the $H_2$ suboptimal consensus problem.
\begin{theorem} \label{h2c}
Assume that the graph $\mathcal {G}$ is connected. Let $\gamma_2>0$.  Then, the distributed protocol \eqref{comdiso1} { achieves $H_2$ consensus} for the network \eqref{netcd1}  if and only if the following $N-1$ subsystems
\begin{equation}\label{neticd}
\begin{aligned}
\dot{\tilde{\xi}}_i &=
   \begin{bmatrix} A  & \lambda_i B_2F \\ GC_2 & A-GC_2 +\lambda_i B_2 F\end{bmatrix}\tilde{\xi}_i +\begin{bmatrix} B_0 \\  GD_0 \end{bmatrix}\tilde{w}_{0i}, \\
    \tilde{z}_i & = \begin{bmatrix} C_1 & 0\end{bmatrix}  \tilde{\xi}_i ,\quad i=2,\cdots,N,
\end{aligned}
\end{equation}
are internally stable and $\sum_{j=2}^N\|\tilde{T}_{\tilde{w}_{0i} \tilde{z}_i}\|_2^2<\gamma_2^2$, where $\tilde{T}_{\tilde{w}_{0i} \tilde{z}_i}$ denotes the transfer function matrix of \eqref{neticd} from $\tilde{w}_{0i}$ to $\tilde{z}_i $.
\end{theorem}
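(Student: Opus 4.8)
The plan is to block-diagonalize the network dynamics \eqref{netcd1} along the eigenspaces of the Laplacian $\mathcal{L}$, so that the $H_2$ consensus requirement decomposes into $N-1$ decoupled modal conditions, each of which turns out to be exactly \eqref{neticd}. Since $\mathcal{G}$ is connected, Lemma~\ref{lem2} says $\mathcal{L}$ is symmetric positive semidefinite with a simple zero eigenvalue, so there is an orthogonal matrix $U=[\tfrac{1}{\sqrt{N}}\mathbf{1},\,U_2]$ with $U^{T}\mathcal{L}U=\Lambda=\mathrm{diag}(0,\lambda_{2},\dots,\lambda_{N})$ and $0<\lambda_{2}\le\cdots\le\lambda_{N}$. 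First I would pass to the error coordinates \eqref{error_e}, rewriting \eqref{netcd1} in the variables $(x,e)$ (this leaves $T_{w_0z}$ and the consensus property untouched, since $z$ depends only on $x$, and it makes the observer-error dynamics \eqref{comdiso2} explicit), and then apply the orthogonal change of variables $(U^{T}\otimes I)$ to each of $x$, $e$, $w_{0}$ and $z$. Because $U$ is orthogonal, the transformed $w_0$ is again a unit-intensity white noise and the $L_2$ norms of $z$ and $w_0$ are preserved; by the mixed-product rule for Kronecker products every occurrence of $\mathcal{L}$ is replaced by $\Lambda$, while $\mathcal{M}=I-\tfrac{1}{N}\mathbf{1}\mathbf{1}^{T}$ becomes $U^{T}\mathcal{M}U=\mathrm{diag}(0,1,\dots,1)$. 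Hence the closed loop splits into $N$ independent subsystems indexed by $\lambda_{1}=0,\lambda_{2},\dots,\lambda_{N}$.

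For the $\lambda_{1}=0$ mode the output matrix vanishes --- the factor $\mathrm{diag}(0,1,\dots,1)$ kills the $\mathbf{1}$-direction --- so this mode contributes nothing to $T_{w_0z}$ and nothing to the disagreement $x_{i}-x_{j}$, and may be discarded. For each $i\in\{2,\dots,N\}$ the $\lambda_i$-mode, after an invertible, $\lambda_i$-dependent state-coordinate change (a rescaling of the error sub-block by $\lambda_{i}^{-1}$, legitimate precisely because connectivity gives $\lambda_{i}>0$, followed by an elementary block-triangular change of basis), is exactly the system \eqref{neticd}, with $\tilde{w}_{0i}$, $\tilde{z}_{i}$ the $i$-th modal components of $w_0$, $z$. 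Since all of these are invertible state-coordinate changes, the transfer function from $\tilde{w}_{0i}$ to $\tilde{z}_{i}$ and the internal stability of the mode coincide with those of \eqref{neticd}; moreover a short reduction shows the system matrix of \eqref{neticd} is similar to a block-triangular matrix with diagonal blocks $A-GC_{2}$ (the observer-error dynamics of \eqref{comdiso2}) and $A+\lambda_{i}B_{2}F$, so \eqref{neticd} is internally stable iff both of these are Hurwitz.

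It remains to assemble the two directions. If every \eqref{neticd}, $i=2,\dots,N$, is internally stable, then with $w_{0}=0$ each disagreement mode decays, hence $(\mathcal{M}\otimes I)x(t)\to0$ and consensus is achieved; and $T_{w_0z}$ is, up to orthogonal pre- and post-multiplication by $U\otimes I$, equal to $\mathrm{blockdiag}(0,\tilde{T}_{\tilde{w}_{02}\tilde{z}_{2}},\dots,\tilde{T}_{\tilde{w}_{0N}\tilde{z}_{N}})$, so, using that orthogonal input/output transformations preserve the $H_2$ norm and that the squared $H_2$ norm of a block-diagonal transfer matrix is the sum of the squared $H_2$ norms of its blocks, $\|T_{w_0z}\|_{2}^{2}=\sum_{i=2}^{N}\|\tilde{T}_{\tilde{w}_{0i}\tilde{z}_{i}}\|_{2}^{2}<\gamma_{2}^{2}$. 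The converse uses the same identity once each modal matrix is known to be Hurwitz, and this is the step I expect to need the most care: a finite $\|T_{w_0z}\|_{2}$ by itself only forces the controllable--observable part of each mode to be stable, so one must exploit that $x_{i}-x_{j}\to0$ for \emph{every} initial condition of \eqref{netcd1} --- in particular for arbitrary initializations of the observers \eqref{comdiso1a} with $x(0)=0$ --- which, together with the standing stabilizability of $(A,B_{2})$, detectability of $(C_{2},A)$, and the block-triangular structure above, forces $A-GC_{2}$ and then $A+\lambda_{i}B_{2}F$ to be Hurwitz and rules out pathological pole--zero cancellations. Once this is in hand, the block-diagonal norm identity gives $\sum_{i=2}^{N}\|\tilde{T}_{\tilde{w}_{0i}\tilde{z}_{i}}\|_{2}^{2}<\gamma_{2}^{2}$, completing the proof.
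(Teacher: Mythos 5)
Your proposal is correct and follows essentially the same route as the paper: diagonalize $\mathcal{L}$ by an orthogonal $U$, use $U^{T}\mathcal{M}U=\mathrm{diag}(0,1,\dots,1)$ and the unitary invariance of the $H_2$ norm to split the closed loop into $N-1$ disagreement modes, and then apply the $\lambda_i$-dependent rescaling $\mathrm{diag}(I,\lambda_i^{-1}I)$ to bring each mode into the form \eqref{neticd}. The only differences are cosmetic (you work in $(x,e)$ rather than $(x,v)$ coordinates) plus your more careful handling of the converse direction via the all-initial-conditions consensus requirement, a detail the paper's sketch leaves implicit.
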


\begin{proof}
The result can be proved by following similar lines in \cite{li2014cooperative,JIAO2020109164}.
The key steps are sketched here for clarity.
First, we apply the unitary transformation  $U \otimes I$ onto the dynamics of the consensus error $(\mathcal {M} \otimes I) \xi$,  where $U$ is a unitary matrix such that $U^T\mathcal {L}U=\mathrm{diag}(0,\lambda_2,\cdots,\lambda_N)$.
Note that $U^T\mathcal {M}U=\mathrm{diag}(0,1,\cdots,1)$, see e.g.,  \cite{li2014cooperative}.
Next, by observing that the $H_2$ norm is invariant under unitary transformations, we can get that the $H_2$ suboptimal consensus problem is solved if and only if the following $N-1$ subsystems
\begin{equation}\label{neticdxq}
\begin{aligned}
\dot{\bar{\xi}}_i &=
   \begin{bmatrix} A  &  B_2F \\ \lambda_iGC_2 & A-GC_2 +\lambda_i B_2 F\end{bmatrix}\bar{\xi}_i +\begin{bmatrix} B_0 \\  \lambda_iGD_0 \end{bmatrix}\tilde{w}_{0i}, \\
    \tilde{z}_i & = \begin{bmatrix} C_1 & 0\end{bmatrix}  \bar{\xi}_i ,\quad i=2,\cdots,N,
\end{aligned}
\end{equation}
are internally stable and $\sum_{j=2}^N\|\tilde{T}_{\tilde{w}_{0i} \tilde{z}_i}\|_2^2<\gamma_2^2$. Now, by letting $\tilde{\xi}_i=\left[\begin{smallmatrix}  I & 0 \\ 0 & \frac{1}{\lambda_i} I\end{smallmatrix} \right]\bar{\xi}_i$, evidently the subsystems in \eqref{neticdxq} are equivalent to those in \eqref{neticd}. 
\end{proof}

Before moving forwards, we need to make the following assumption and introduce a lemma.
\begin{assumption}\label{assum_sys_matrices}
	The system matrices in \eqref{d1} satisfy that $D_0B_0^T=0$ and $D_0D_0^T=I$.
\end{assumption}

\begin{lemma} [\cite{JIAO2020109164,haesaert2018separation}]\label{lemtr}
Suppose Assumption \ref{assum_sys_matrices} holds. Consider the $i$-th subsystem in \eqref{neticd} with $\lambda_i=1$. Let $P>0$ and $Q>0$, respectively, satisfy the following inequalities:
\begin{equation}\label{neticdxqzo1}
(A+B_2F)^TP+P(A+B_2F)+C_1^TC_1<0,
\end{equation}
\begin{equation}\label{neticdxqzo}
	AQ+QA^T-QC_2^TC_2Q+B_0B_0^T<0.
\end{equation}
If the inequality
$$\mathrm{tr}(C_2QPQC_2^T)+\mathrm{tr}(C_1QC_1^T)<\gamma^2$$
holds, then $\tilde{T}_{\tilde{w}_{0i} \tilde{z}_i}$, with $G=QC_2^T$ and $\lambda_i=1$, satisfies that $\|\tilde{T}_{\tilde{w}_{0i} \tilde{z}_i}\|_2<\gamma$.
\end{lemma}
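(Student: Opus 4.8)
The plan is to bring the $i$-th subsystem of \eqref{neticd} (with $\lambda_i=1$ and $G=QC_2^T$) into a block-triangular ``plant $\oplus$ estimation-error'' form and then to exhibit an explicit block matrix whose trace reproduces $\mathrm{tr}(C_2QPQC_2^T)+\mathrm{tr}(C_1QC_1^T)$ and that certifies the required $H_2$ level via Lemma \ref{lemh2}. Concretely, I would partition $\tilde\xi_i=[\tilde\xi_{i,1}^T,\tilde\xi_{i,2}^T]^T$ conformably and apply the unit-determinant similarity transformation $\chi_1=\tilde\xi_{i,1}$, $\chi_2=\tilde\xi_{i,2}-\tilde\xi_{i,1}$, which preserves $H_2$ norms and turns \eqref{neticd} into $\dot\chi=A_c\chi+B_c\tilde w_{0i}$, $\tilde z_i=C_c\chi$ with
\[
A_c=\begin{bmatrix}A+B_2F & B_2F\\ 0 & A-GC_2\end{bmatrix},\qquad B_c=\begin{bmatrix}B_0\\ GD_0-B_0\end{bmatrix},\qquad C_c=\begin{bmatrix}C_1 & 0\end{bmatrix}.
\]
Then \eqref{neticdxqzo1} forces $A+B_2F$ to be Hurwitz, and \eqref{neticdxqzo}, which with $G=QC_2^T$ is exactly $(A-GC_2)Q+Q(A-GC_2)^T+GG^T+B_0B_0^T<0$, forces $A-GC_2$ to be Hurwitz; hence $A_c$ is Hurwitz and $\tilde T_{\tilde w_{0i}\tilde z_i}$ is internally stable and strictly proper. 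Under Assumption \ref{assum_sys_matrices} one also has $(GD_0-B_0)(GD_0-B_0)^T=GG^T+B_0B_0^T$ and $B_0(GD_0-B_0)^T=-B_0B_0^T$, which is what makes the cross terms below collapse.

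Next, let $N\ge 0$ be the solution of $(A+B_2F)N+N(A+B_2F)^T+GG^T=0$ (well posed, since $A+B_2F$ is Hurwitz and $GG^T=QC_2^TC_2Q\ge 0$), and set $X=\left[\begin{smallmatrix}Q+N & -Q\\ -Q & Q\end{smallmatrix}\right]$. A direct block computation --- using the $N$-equation in the $(1,1)$ entry, \eqref{neticdxqzo} in the $(2,2)$ entry, and the two identities above in the $(1,2)$ entry --- shows that
\[
A_cX+XA_c^T+B_cB_c^T=\begin{bmatrix}-S & S\\ S & -S\end{bmatrix}\le 0,\qquad S:=-\bigl(AQ+QA^T-QC_2^TC_2Q+B_0B_0^T\bigr)>0,
\]
while a Schur complement with respect to the $(2,2)$ block of $X$ shows $X\ge 0$. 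Since $A_c$ is Hurwitz, the standard Lyapunov comparison gives $\mathcal X^\star\le X$ for the controllability Gramian $\mathcal X^\star$ of $(A_c,B_c)$, so
\[
\|\tilde T_{\tilde w_{0i}\tilde z_i}\|_2^2=\mathrm{tr}(C_c\mathcal X^\star C_c^T)\le \mathrm{tr}(C_cXC_c^T)=\mathrm{tr}(C_1QC_1^T)+\mathrm{tr}(C_1NC_1^T).
\]

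To conclude, I would bound the second term: writing $N=\int_0^\infty e^{(A+B_2F)t}GG^Te^{(A+B_2F)^Tt}\,dt$ gives $\mathrm{tr}(C_1NC_1^T)=\mathrm{tr}(G^T\hat PG)$ with $\hat P\ge 0$ solving $(A+B_2F)^T\hat P+\hat P(A+B_2F)+C_1^TC_1=0$; comparing this equation with \eqref{neticdxqzo1} yields $\hat P\le P$, so $\mathrm{tr}(C_1NC_1^T)\le\mathrm{tr}(G^TPG)=\mathrm{tr}(PQC_2^TC_2Q)=\mathrm{tr}(C_2QPQC_2^T)$. Combining with the previous display, $\|\tilde T_{\tilde w_{0i}\tilde z_i}\|_2^2\le\mathrm{tr}(C_2QPQC_2^T)+\mathrm{tr}(C_1QC_1^T)<\gamma^2$, which is the assertion. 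If one prefers to appeal to Lemma \ref{lemh2}(ii) verbatim, replace $Q+N$ by $Q+N+\Delta$ with $\Delta>0$ solving $(A+B_2F)\Delta+\Delta(A+B_2F)^T=-\epsilon I$ and $\epsilon>0$ chosen so that $\mathrm{tr}(C_1\Delta C_1^T)$ stays within the gap $\gamma^2-\mathrm{tr}(C_2QPQC_2^T)-\mathrm{tr}(C_1QC_1^T)$; this $X$ is then positive definite and the Lyapunov inequality strict.

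The only genuine obstacle is guessing the off-diagonal block $-Q$ in $X$, equivalently the estimation-error coordinate $\chi_2=\tilde\xi_{i,2}-\tilde\xi_{i,1}$; once that is in place every term of the block Lyapunov expression collapses to $\pm S$ thanks to $G=QC_2^T$ and Assumption \ref{assum_sys_matrices}, and the remaining steps are routine Lyapunov comparisons. The one point to watch is that the natural $X$ is only positive semidefinite and produces a non-strict Lyapunov inequality, so the $H_2$ estimate should be extracted through the Gramian comparison (or the $\epsilon$-perturbation just described) rather than by a direct application of the strict characterization in Lemma \ref{lemh2}.
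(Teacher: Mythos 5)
Your proposal is correct, and every step checks out: the change of coordinates $\chi_2=\tilde\xi_{i,2}-\tilde\xi_{i,1}$ does produce the block-triangular pair $(A_c,B_c,C_c)$ you write; with $G=QC_2^T$ the inequality \eqref{neticdxqzo} is exactly $(A-GC_2)Q+Q(A-GC_2)^T+GG^T+B_0B_0^T<0$; Assumption \ref{assum_sys_matrices} gives precisely the two identities needed for $B_cB_c^T$; and the block Lyapunov computation with $X=\left[\begin{smallmatrix}Q+N & -Q\\ -Q & Q\end{smallmatrix}\right]$ does collapse to $\left[\begin{smallmatrix}-S & S\\ S & -S\end{smallmatrix}\right]\le 0$. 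The Gramian comparison and the bound $\mathrm{tr}(C_1NC_1^T)=\mathrm{tr}(G^T\hat PG)\le\mathrm{tr}(G^TPG)=\mathrm{tr}(C_2QPQC_2^T)$ are also sound, and you correctly flag that the natural certificate is only semidefinite, so the estimate must go through the Gramian (or an $\epsilon$-perturbation) rather than Lemma \ref{lemh2}(ii) verbatim. Note that the paper does not prove this lemma at all --- it is imported from \cite{JIAO2020109164,haesaert2018separation} --- so what you have written is a self-contained substitute for the cited separation theorem. The cited works establish the result by exhibiting a structured Lyapunov/LMI certificate for the closed loop directly; your route reaches the same separation (state-feedback inequality \eqref{neticdxqzo1} plus filter inequality \eqref{neticdxqzo} combined through the error coordinate) but executes it with explicit Lyapunov equations and a monotone comparison of Gramians, which makes the origin of each trace term in the bound $\mathrm{tr}(C_2QPQC_2^T)+\mathrm{tr}(C_1QC_1^T)$ completely transparent, at the cost of being tied to the observer-based structure rather than stated as a general matrix-inequality separation principle.
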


The following theorem provides a design method for obtaining distributed protocols \eqref{comdiso1} that achieves $H_2$ suboptimal consensus.

\begin{theorem} \label{h2cdesign}
Assume that Assumption \ref{assum_sys_matrices} holds and the graph $\mathcal {G}$ is  connected. Let $\gamma_2>0$.
Let $Q>0$ be a solution to   \eqref{neticdxqzo}. Let $P>0,W>0,\tau>0$ be solutions to the following LMIs:
\begin{equation}\label{neticdxqzo2}
\begin{bmatrix}\bar{P}A^T + A\bar{P} -\tau B_2B_2^T  &\bar{P}C_1^T \\ C_1 \bar{P} & -I\end{bmatrix}<0,
\end{equation}
\begin{equation}\label{neticdxqzo2x}
\begin{bmatrix}
\bar{P} & C_2Q \\ QC_2^T & W\end{bmatrix}>0,
\end{equation}
\begin{equation}\label{neticdxqzo2z}
\mathrm{tr}(W)+\mathrm{tr}(C_1QC_1^T)<\frac{\gamma_2^2}{N-1}.
\end{equation}
Then, the protocol \eqref{comdiso1}  with $G=QC_2^T$, $F=-cB_2^T\bar{P}^{-1}$  and $c\geq \frac{\tau}{2\lambda_2}$ { achieves $H_2$  consensus.}
\end{theorem}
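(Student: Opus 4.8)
\emph{Proof proposal.} The plan is to reduce the statement to Theorem~\ref{h2c} and then apply Lemma~\ref{lemtr} to each of the $N-1$ decoupled subsystems \eqref{neticd}. The key observation is a scaling trick: in \eqref{neticd} the gain $F$ enters only through the product $\lambda_i B_2 F$, in exactly the two slots (the $(1,2)$ block and inside the $(2,2)$ block) where a gain would enter the ``$\lambda_i=1$'' subsystem of Lemma~\ref{lemtr}. Hence the $i$-th subsystem of \eqref{neticd} driven by $F$ coincides with the $\lambda_i=1$ subsystem of Lemma~\ref{lemtr} with $F$ replaced by $\lambda_i F$, and it suffices to verify the hypotheses of Lemma~\ref{lemtr} with $F\mapsto\lambda_i F$ and $\gamma^2=\gamma_2^2/(N-1)$ for every $i=2,\dots,N$; summing the resulting $N-1$ squared bounds then gives $\sum_{i=2}^N\|\tilde T_{\tilde w_{0i}\tilde z_i}\|_2^2<\gamma_2^2$, and Theorem~\ref{h2c} delivers $H_2$ consensus.

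I would carry out the verification with $P:=\bar P^{-1}$ and $G=QC_2^T$. Hypothesis \eqref{neticdxqzo} of Lemma~\ref{lemtr} is assumed outright; it also makes $A-GC_2$ Hurwitz. For \eqref{neticdxqzo1}, a Schur complement applied to \eqref{neticdxqzo2} yields $\bar P A^T+A\bar P-\tau B_2B_2^T+\bar P C_1^TC_1\bar P<0$, and the congruence $X\mapsto\bar P^{-1}X\bar P^{-1}$ turns this into $A^TP+PA-\tau PB_2B_2^TP+C_1^TC_1<0$. Since $F=-cB_2^T\bar P^{-1}=-cB_2^TP$ and $2c\lambda_i\ge 2c\lambda_2\ge\tau$ (using $\lambda_i\ge\lambda_2>0$ from Lemma~\ref{lem2}, as $\mathcal G$ is connected),
$$(A+\lambda_iB_2F)^TP+P(A+\lambda_iB_2F)+C_1^TC_1=A^TP+PA-2c\lambda_iPB_2B_2^TP+C_1^TC_1\le A^TP+PA-\tau PB_2B_2^TP+C_1^TC_1<0,$$
which is exactly \eqref{neticdxqzo1} for the gain $\lambda_i F$; in particular $A+\lambda_iB_2F$ is Hurwitz.

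For the trace condition, a Schur complement of \eqref{neticdxqzo2x} gives $\mathrm{tr}(W)>\mathrm{tr}(C_2Q\bar P^{-1}QC_2^T)=\mathrm{tr}(C_2QPQC_2^T)$, which with \eqref{neticdxqzo2z} yields $\mathrm{tr}(C_2QPQC_2^T)+\mathrm{tr}(C_1QC_1^T)<\gamma_2^2/(N-1)$ — precisely the remaining hypothesis of Lemma~\ref{lemtr} with $\gamma^2=\gamma_2^2/(N-1)$, noting the left-hand side does not depend on $\lambda_i$. Thus Lemma~\ref{lemtr} gives $\|\tilde T_{\tilde w_{0i}\tilde z_i}\|_2<\gamma_2/\sqrt{N-1}$ for each $i=2,\dots,N$, and internal stability of \eqref{neticd} follows since, after the usual change to observer-error coordinates, its system matrix is block triangular with Hurwitz diagonal blocks $A+\lambda_iB_2F$ and $A-GC_2$; summing over $i$ and invoking Theorem~\ref{h2c} finishes the argument. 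The main obstacle is the scaling trick together with seeing that the single LMI \eqref{neticdxqzo2}, through the coupling bound $c\ge\tau/(2\lambda_2)$, simultaneously implies all $N-1$ required Lyapunov inequalities; the rest is routine manipulation with Schur complements and congruence transformations.
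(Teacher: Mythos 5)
Your proposal is correct and follows essentially the same route as the paper: reduce to the $N-1$ decoupled subsystems via Theorem~\ref{h2c}, apply Lemma~\ref{lemtr} (with the $\lambda_i$ absorbed into the gain), and verify its hypotheses from \eqref{neticdxqzo2}--\eqref{neticdxqzo2z} by Schur complements, the congruence $P=\bar P^{-1}$, and the bound $2c\lambda_i\ge 2c\lambda_2\ge\tau$. You are in fact slightly more explicit than the paper in justifying why the $\lambda_i=1$ lemma applies to general $\lambda_i$ and in arguing internal stability via the block-triangular error coordinates, but the argument is the same.
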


\begin{proof}
In light of Theorem \ref{h2c},  { the network \eqref{netcd1}    achieves $H_2$  consensus}  if the $N-1$ subsystems in \eqref{neticd} are internally stable and $\|\tilde{T}_{\tilde{w}_{0i} \tilde{z}_i}\|_2^2<\frac{\gamma_2^2}{N-1}$.
According to Lemma \ref{lemtr}, the $i$-th subsystem in \eqref{neticd} is internally stable and $\|\tilde{T}_{\tilde{w}_{0i} \tilde{z}_i}\|_2^2<\frac{\gamma_2^2}{N-1}$, if there exist  $Q>0$ satisfying \eqref{neticdxqzo} and   $P>0$ such that
 \begin{equation}\label{neticdxqzo4}
(A+\lambda_iB_2F)^TP+P(A+\lambda_i B_2F)+C_1^TC_1<0,
\end{equation}
and
\begin{equation}\label{neticdxqzo5}
\mathrm{tr}(C_2QPQC_2^T)+\mathrm{tr}(C_1QC_1^T)<\frac{\gamma_2^2}{N-1}.
\end{equation}
Let $\bar{P}=P^{-1}$. Multiplying on both sides of \eqref{neticdxqzo4} by $\bar{P}$ and in light of Schur Complement Lemma \cite{boyd1994linear}, we  obtain that \eqref{neticdxqzo4}  and \eqref{neticdxqzo5}  hold if and only if
 \begin{equation}\label{neticdxqzo6}
\begin{bmatrix} \bar{P}(A+\lambda_iB_2F)^T+(A+\lambda_i B_2F)\bar{P} &\bar{P}C_1^T \\ C_1 \bar{P} & -I\end{bmatrix}<0,
\end{equation}
and
the inequalities \eqref{neticdxqzo2x} and \eqref{neticdxqzo2z} hold at the same  time.
Evidently,
if we choose $F=-cB_2^T\bar{P}^{-1}$ and $c\geq \frac{\tau}{2\lambda_2}$, then \eqref{neticdxqzo2} implies \eqref{neticdxqzo6} and thereby \eqref{neticdxqzo4}.
\end{proof}


\begin{remark}
The separation property of observed-based controllers shown in \cite{JIAO2020109164,haesaert2018separation} is employed in this theorem. The observer gain matrix $ G$ and the feedback gain $F$ are designed in a decoupled way.  Moreover, the feasibility of \eqref{neticdxqzo2} is equivalent to that of \eqref{neticdxqzo6}.
Note that by letting $F\bar{P}=V$ and $\lambda_i=1$, we know that \eqref{neticdxqzo6} holds, then
$$
\bar{P}A+A^T\bar{P}+B_2V+V^TB_2^T+\bar{P}C_1^TC_1\bar{P}<0,$$
which, in light of Finsler's Lemma \cite{iwasaki1994all,li2014cooperative}, is equivalent to that there exist $\bar{P}>0$ and $\tau>0$ such that \eqref{neticdxqzo2} holds. Therefore,
\eqref{neticdxqzo6} implies \eqref{neticdxqzo2}. The converse was shown in the proof. \end{remark}

\subsubsection{Step Two}

In the second step, we design an additional regulating control input $u_{\infty i}$ to deal with the external  disturbances $w_i$ and to guarantee the $H_\infty$ robustness  while not significantly compromising the $H_2$ performance. Since the  noise $w_{0i}$ has been taken care of and filtered out in the first  step,  we only consider the effect of $w_i$ in the second step.

Under the $H_2$ consensus protocol \eqref{comdiso1} in the first step, the augmented agent dynamics are described by
\begin{equation}\label{comdiso4}
\begin{aligned}
\dot{x}_i &=Ax_i+B_2u_i+B_1w_i,\\
\dot{v}_i &=(A-GC_2)v_i+\sum_{j=1}^Na_{ij}\Big(  B_2F(v_{i}-v_{j})\\
&\qquad+GC_2(x_{i}-x_{j})+GD_1(w_i-w_j)\Big),\\
u_i &=u_{2i}+ u_{\infty i},\\
u_{2i} &=Fv_i,\quad i=1,\cdots,N,
\end{aligned}
\end{equation}
where the gain matrices $F$ and $G$  are designed in the first step.

The residual signal $f=[f_1^T,\cdots,f_N^T]^T$ in Fig. \ref{fig2} is used in the second step to activate the {\em inner loop}. It   builds on the protocol \eqref{comdiso1} and is given by
\begin{equation}\label{comdiso5}
\begin{aligned}
f_i &\triangleq  C_2v_i-\sum_{j=1}^Na_{ij}(y_{i}-y_{j}) \\
&= C_2 e_i -D_1\sum_{j=1}^Na_{ij}(w_{i}-w_{j}),~ i=1,\cdots,N,
\end{aligned}
\end{equation}
where $e_i$ is defined as in \eqref{error_e}.

In this step, we consider a distributed protocol of the form
\begin{equation}\label{comdiso6}
    \begin{aligned}
    \dot{n}_i & =  A_cn_i+ B_cf_i, \\
    u_{\infty i} & =  C_cn_i+D_cf_i,
    \end{aligned}
\end{equation}
where $n_i\in\mathbf{R}^{n}$ is the state of the protocol,  and $A_c$, $B_c$, $C_c$, $D_c$ are protocol matrices to be designed. In this case, $Q(s)=\left[
\begin{array}{c|c}A_c & B_c\\ \hline C_c & D_c\end{array}\right]$ in Fig. \ref{fig2}. It should be mentioned that here we assume that $u_{\infty i}$ is a general dynamic controller with $f_i$ as its input. Special forms such as observer-based ones in \cite{trentelman2013robust,lixw2018robust} can be also considered.

Note that the error $e_i$ in the current case satisfies
\begin{equation}\label{comdiso2x}
\begin{aligned}
\dot{e}_i=(A-GC_2)e_i+(GD_1-B_1)\sum_{j=1}^Na_{ij}(w_{i}-w_{j}).
\end{aligned}
\end{equation}
Evidently, if external disturbances $w_i$ are equal to zero, then $e_i$ and thereby $f_i$ will asymptotically converge to zero. In other words, the inner loop will be activated as $f_i$ are bounded signals and $u_{\infty i}$ will be implemented to recover robustness.

Denote $e=[e_1^T,\cdots,e_N^T]^T$, $n=[n_1^T,\cdots,n_N^T]^T$ and $\zeta =\begin{bmatrix} x^T & e^T\end{bmatrix}^T$.
Using \eqref{comdiso4} and \eqref{comdiso6}, we obtain the closed-loop network dynamics in compact form as
\begin{equation}\label{comdiso7}
\begin{aligned}
    \begin{bmatrix}\dot{\zeta} \\\dot{n}\end{bmatrix}&=\begin{bmatrix}\bar{\mathscr{A}}+\mathscr{B}_2\mathscr{D}_c\mathscr{C}_2 & \mathscr{B}_2\mathscr{C}_c\\
    \mathscr{B}_c\mathscr{C}_2 &\mathscr{A}_c\end{bmatrix}\begin{bmatrix} \zeta\\n\end{bmatrix}
    +\begin{bmatrix}\mathscr{B}_1-\mathscr{B}_2\mathscr{D}_c\mathscr{D}_1\\ -\mathscr{B}_c\mathscr{D}_1\end{bmatrix}w, \\
    z & = \begin{bmatrix}\mathscr{C}_1 & 0\end{bmatrix}\begin{bmatrix} \zeta\\n\end{bmatrix},
\end{aligned}
\end{equation}
where $\mathscr{C}_1$ is defined in \eqref{netcd10}, and
\begin{equation}\label{comdiso70}
\begin{aligned}
\bar{\mathscr{A}} &=\begin{bmatrix} I\otimes A  +\mathcal {L}\otimes B_2 F & I\otimes B_2F \\ 0 & I\otimes (A-GC_2)\end{bmatrix},\\
\mathscr{B}_2 &=\begin{bmatrix} I\otimes B_2  \\ 0 \end{bmatrix},~ \mathscr{B}_1=\begin{bmatrix} I\otimes B_1 \\ \mathcal {L}\otimes (GD_1-B_1)\end{bmatrix},\\
\mathscr{C}_2 &=I\otimes \begin{bmatrix} 0 & C_2\end{bmatrix},~\mathscr{D}_1=  \mathcal {L}\otimes  D_1, ~\mathscr{A}_c =I\otimes A_c,\\
\mathscr{B}_c &=I\otimes B_c,~\mathscr{C}_c =I \otimes C_c,~\mathscr{D}_c =I \otimes D_c.
\end{aligned}
\end{equation}

By following similar steps in deriving Theorem \ref{h2c}, it is not difficult to obtain the following result.

\begin{theorem}  \label{hinfc}
Assume that graph $\mathcal {G}$ is connected. Let $\gamma_\infty >0$.
Then, the network \eqref{comdiso7}  { achieves $H_\infty$ consensus} if and only if the following $N-1$ subsystems:
\begin{equation}\label{neticdcd}
\begin{aligned}
\begin{bmatrix}\dot{\tilde{\zeta}}_i \\ \dot{\tilde{n}}_i\end{bmatrix}&=
   (\mathfrak{A}_i+\mathfrak{B}_2\mathfrak{K}\mathfrak{C}_2)\begin{bmatrix}\tilde{\zeta}_i \\ \tilde{n}_i\end{bmatrix}+( \mathfrak{B}_{1i}+\mathfrak{B}_2\mathfrak{K} \mathfrak{D}_{1i})\tilde{w}_{i}, \\
    \tilde{z}_i & = \mathfrak{C}_{1}\begin{bmatrix}\tilde{\xi}_i \\\tilde{n}_i\end{bmatrix},~i=2,\cdots,N,
\end{aligned}
\end{equation}
are internally stable  and the associated transfer functions satisfy $\|\tilde{T}_{\tilde{w}_{i} \tilde{z}_i}\|_\infty<\gamma_\infty$,
where
\begin{equation*}\label{neticdcd0}
\begin{aligned}
 \mathfrak{A}_i&=\begin{bmatrix} A  +\lambda_i B_2 F & B_2F & 0 \\ 0 & A-GC_2 & 0\\ 0 & 0 & 0\end{bmatrix},~\mathfrak{B}_2=\begin{bmatrix} 0 & B_2\\ 0& 0\\I & 0\end{bmatrix},\\
\mathfrak{K}&=\begin{bmatrix} A_c & B_c\\ C_c& D_c\end{bmatrix},~
\mathfrak{C}_2=\begin{bmatrix} 0 & 0 & I \\ 0 &C_2& 0\end{bmatrix},~
\mathfrak{C}_{1}^T=\begin{bmatrix} C_1^T  \\ 0 \\ 0\end{bmatrix},\\
\mathfrak{B}_{1i}&=\begin{bmatrix} B_1\\  \lambda_i(GD_1-B_1) \\ 0\end{bmatrix},~
\mathfrak{D}_{1i}=\begin{bmatrix} 0   \\ -\lambda_i D_1\end{bmatrix}.
\end{aligned}
\end{equation*}
\end{theorem}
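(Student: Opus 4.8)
The plan is to mimic the proof of Theorem~\ref{h2c} almost verbatim, exploiting the fact that the closed-loop network~\eqref{comdiso7} has a tensor-product structure built from the Laplacian $\mathcal{L}$. First I would take the same unitary matrix $U$ with $U^T\mathcal{L}U=\mathrm{diag}(0,\lambda_2,\dots,\lambda_N)$ as in Theorem~\ref{h2c}, recalling that $U^T\mathcal{M}U=\mathrm{diag}(0,1,\dots,1)$. Because $z=\mathscr{C}_1[\zeta^T\,n^T]^T$ with $\mathscr{C}_1=\mathcal{M}\otimes[C_1\ 0]$, the performance output lives entirely in the ``consensus subspace'' (the orthogonal complement of $\mathbf{1}$); applying $U\otimes I$ to the coordinates of $\zeta$ and $n$ therefore block-diagonalizes the map from $w$ to $z$ into one trivial ($\lambda_1=0$) channel that does not reach $z$, plus $N-1$ channels indexed by $\lambda_2,\dots,\lambda_N$. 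Each such channel has state matrices obtained from \eqref{comdiso70} by replacing every occurrence of $\mathcal{L}$ with the scalar $\lambda_i$ and every $I\otimes(\cdot)$ with $(\cdot)$, which, after collecting the $\zeta$ and $n$ coordinates and rearranging the block order (moving the $n_i$ block to the bottom), is exactly the system~\eqref{neticdcd} with the stated $\mathfrak{A}_i,\mathfrak{B}_2,\mathfrak{K},\mathfrak{C}_2,\mathfrak{C}_1,\mathfrak{B}_{1i},\mathfrak{D}_{1i}$.

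Next I would invoke the fact that the $H_\infty$ norm is invariant under the unitary coordinate change $U\otimes I$ on the state (and, since $w$ and $z$ retain their original norms after restricting to the consensus subspace, under the reduction as well), together with the standard fact that the $H_\infty$ norm of a block-diagonal transfer matrix is the maximum of the norms of the diagonal blocks, while internal stability of the whole is equivalent to internal stability of each block. Hence $\|T_{wz}\|_\infty<\gamma_\infty$ and internal stability of~\eqref{comdiso7} on the consensus subspace hold if and only if every subsystem in~\eqref{neticdcd} is internally stable and $\|\tilde{T}_{\tilde{w}_i\tilde{z}_i}\|_\infty<\gamma_\infty$ for $i=2,\dots,N$. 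I would also note that $H_\infty$ consensus as defined in the paper only concerns $T_{wz}$ restricted to this subspace (the $\lambda_1=0$ mode carries the average dynamics, which is decoupled from $z$), so there is no separate stability condition to check on the $\mathbf{1}$-direction beyond what is already guaranteed by the first-step design making $A-GC_2$ Hurwitz and $A+\lambda_iB_2F$ Hurwitz; this should be remarked on for completeness.

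The one genuinely fiddly point — and the part I expect to be the main obstacle — is bookkeeping the block structure correctly: the state of~\eqref{comdiso7} is ordered as $(x,e,n)$ with $\zeta=[x^T\,e^T]^T$, whereas~\eqref{neticdcd} writes the per-mode state as $[\tilde\zeta_i^T\,\tilde n_i^T]^T$ with $\mathfrak{A}_i$ a $3\times3$ block matrix whose third block-row/column is the $n_i$ channel and whose $(1,1)$ block already contains $A+\lambda_iB_2F$ coming from $\bar{\mathscr{A}}$. One must check that $\mathscr{B}_2\mathscr{D}_c\mathscr{C}_2$, $\mathscr{B}_2\mathscr{C}_c$, $\mathscr{B}_c\mathscr{C}_2$ and $\mathscr{A}_c$ reassemble precisely into $\mathfrak{B}_2\mathfrak{K}\mathfrak{C}_2$ after the permutation, that $\mathfrak{C}_2$ picks out $n_i$ and $C_2 e_i$ in the right order, and that $\mathscr{B}_1-\mathscr{B}_2\mathscr{D}_c\mathscr{D}_1$ together with $-\mathscr{B}_c\mathscr{D}_1$ collapses to $\mathfrak{B}_{1i}+\mathfrak{B}_2\mathfrak{K}\mathfrak{D}_{1i}$ — in particular that $\mathfrak{D}_{1i}=[0;\,-\lambda_iD_1]$ correctly reproduces the $-\lambda_i\mathscr{B}_c D_1$ feedthrough and the $\mathscr{B}_2\mathscr{D}_c(-\lambda_iD_1)$ correction term. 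None of this is deep, but it is the step most likely to hide a sign or index slip, so I would lay out the permutation matrix explicitly and verify the four blocks of $\mathfrak{A}_i+\mathfrak{B}_2\mathfrak{K}\mathfrak{C}_2$ one at a time. With that verified, the theorem follows by the same ``diagonalize, then use unitary invariance of $\|\cdot\|_\infty$'' argument as Theorem~\ref{h2c}, which is why the paper is content to say it is ``not difficult.''
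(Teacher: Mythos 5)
Your proposal is correct and follows essentially the same route the paper intends, which is why it only remarks that the result is obtained ``by following similar steps in deriving Theorem~\ref{h2c}'': the unitary transformation $U\otimes I$ diagonalizing $\mathcal{L}$, the identity $U^T\mathcal{M}U=\mathrm{diag}(0,1,\cdots,1)$, and unitary invariance of the norm together with the max-over-diagonal-blocks property of $\|\cdot\|_\infty$ in place of the sum-of-squares property of $\|\cdot\|_2$. The block-bookkeeping you single out (reassembling $\mathfrak{A}_i+\mathfrak{B}_2\mathfrak{K}\mathfrak{C}_2$ and $\mathfrak{B}_{1i}+\mathfrak{B}_2\mathfrak{K}\mathfrak{D}_{1i}$ after replacing $\mathcal{L}$ by $\lambda_i$) is indeed the only content beyond Theorem~\ref{h2c}'s argument, and your account of it matches the matrices displayed in the statement.
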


The following theorem  provides a design method for obtaining the distributed control law \eqref{comdiso6}.

\begin{theorem} \label{thhinfo}
Assume that $\mathcal {G}$ is connected and that $B_2$ is of full column rank. Let $\gamma_\infty >0$.
Then the network \eqref{comdiso7} { achieves $H_\infty$ consensus} if there exist   positive definite matrices $S_{11}$ and $S_{22}$, and a matrix $Q_1$ such that
\begin{equation}\label{SandV}
	S = \begin{bmatrix}
		S_{11} & 0 \\
		0 & S_{22}
	\end{bmatrix},\qquad
V = \begin{bmatrix}
	V_1 \\
	0
\end{bmatrix}
\end{equation}
satisfying the following LMI's
\begin{equation}\label{comdiso7a2}
\begin{bmatrix}
	\Upsilon_{1i} & S \bar{\mathfrak{B}}_{1i} + V  {\mathfrak{D}}_{1i} \\
(S \bar{\mathfrak{B}}_{1i} + V  {\mathfrak{D}}_{1i})^T & -\gamma_\infty^2 I
\end{bmatrix}
 <0,
 \end{equation}
 for $i=2,N$, where
 \begin{align*}
&  \Upsilon_{1i}=  \bar{\mathfrak{A}}_{i}^T S + S\bar{\mathfrak{A}}_{i} + \bar{\mathfrak{C}}_{2}^T V^T + V\bar{\mathfrak{C}}_{2} + \bar{\mathfrak{C}}_{1}^T \bar{\mathfrak{C}}_{1},\\
&\bar{\mathfrak{A}}_{i}  = T  \mathfrak{A}_{i} T^{-1}, \ \bar{\mathfrak{B}}_{1i} = T \mathfrak{B}_{1i},\
\bar{\mathfrak{C}}_{2} = \mathfrak{C}_{2}T^{-1},\
\bar{\mathfrak{C}}_{1} =  \mathfrak{C}_{1}  T^{-1},
\end{align*}
and $T$ is a nonsingular matrix such that $T\mathfrak{B}_2=\begin{bmatrix} I \\ 0\end{bmatrix}$.
Then, the system matrix $\mathfrak{K}$ of \eqref{comdiso6} is given by
\begin{equation}\label{comdiso7a3}
\mathfrak{K}=S_{11}^{-1}V_1.
\end{equation}
\end{theorem}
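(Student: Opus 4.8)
The plan is to reduce the $H_\infty$ consensus requirement to the $N-1$ decoupled subsystems of Theorem \ref{hinfc}, and then apply the bounded real lemma (Lemma \ref{brl}, in its strict LMI form) to each subsystem, exploiting the special block structure of $\mathfrak{B}_2$ to absorb the unknown controller matrix $\mathfrak{K}$ linearly. First I would invoke Theorem \ref{hinfc}: it suffices to show that each subsystem \eqref{neticdcd} is internally stable with $\|\tilde T_{\tilde w_i \tilde z_i}\|_\infty < \gamma_\infty$. Applying the standard state-feedback-type change of variables: since $T$ is nonsingular with $T\mathfrak{B}_2 = \begin{bmatrix} I \\ 0\end{bmatrix}$, the coordinate change $\bar{\mathfrak{A}}_i = T\mathfrak{A}_i T^{-1}$ etc. preserves both internal stability and the $H_\infty$ norm (similarity transformation on a realization). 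So it is equivalent to verify the bounded real lemma for the transformed subsystem with closed-loop $A$-matrix $\bar{\mathfrak{A}}_i + \begin{bmatrix} I \\ 0\end{bmatrix}\mathfrak{K}\bar{\mathfrak{C}}_2$ and $B$-matrix $\bar{\mathfrak{B}}_{1i} + \begin{bmatrix} I \\ 0\end{bmatrix}\mathfrak{K}\mathfrak{D}_{1i}$.

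Next I would write the bounded real lemma as a strict LMI in a Lyapunov matrix $S>0$: internal stability plus $\|\cdot\|_\infty<\gamma_\infty$ holds iff there is $S>0$ with
\begin{equation*}
\begin{bmatrix}
(\bar{\mathfrak{A}}_{cl,i})^T S + S\bar{\mathfrak{A}}_{cl,i} + \bar{\mathfrak{C}}_1^T\bar{\mathfrak{C}}_1 & S\bar{\mathfrak{B}}_{cl,i} \\
\bar{\mathfrak{B}}_{cl,i}^T S & -\gamma_\infty^2 I
\end{bmatrix} < 0,
\end{equation*}
which is Lemma \ref{brl}(ii) after a Schur complement on the $-\gamma_\infty^2 I$ block. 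Here $\bar{\mathfrak{A}}_{cl,i} = \bar{\mathfrak{A}}_i + \begin{bmatrix} I \\ 0\end{bmatrix}\mathfrak{K}\bar{\mathfrak{C}}_2$ and similarly for $\bar{\mathfrak{B}}_{cl,i}$. The crucial observation is that $S\begin{bmatrix} I \\ 0\end{bmatrix}\mathfrak{K}$ is \emph{linear} in the product $S_{11}\mathfrak{K}$ only if $S$ is block diagonal with the $(1,1)$ block conformable to $\mathfrak{K}$; this is exactly why the theorem restricts $S = \mathrm{diag}(S_{11}, S_{22})$ as in \eqref{SandV}. With that restriction, $S\begin{bmatrix} I \\ 0\end{bmatrix}\mathfrak{K}\bar{\mathfrak{C}}_2 = \begin{bmatrix} S_{11}\mathfrak{K} \\ 0\end{bmatrix}\bar{\mathfrak{C}}_2$, so setting $V := \begin{bmatrix} S_{11}\mathfrak{K} \\ 0\end{bmatrix} = \begin{bmatrix} V_1 \\ 0\end{bmatrix}$ makes the LMI affine in the decision variables $(S_{11}, S_{22}, V_1)$; expanding $S\bar{\mathfrak{A}}_{cl,i} = S\bar{\mathfrak{A}}_i + V\bar{\mathfrak{C}}_2$ and $S\bar{\mathfrak{B}}_{cl,i} = S\bar{\mathfrak{B}}_{1i} + V\mathfrak{D}_{1i}$ (noting $\mathfrak{D}_{1i}$ is untransformed since it multiplies $\mathfrak{K}$ on the input side), we arrive precisely at $\Upsilon_{1i}$ and the off-diagonal term $S\bar{\mathfrak{B}}_{1i} + V\mathfrak{D}_{1i}$ in \eqref{comdiso7a2}. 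Feasibility of \eqref{comdiso7a2} therefore gives $S>0$ certifying the bounded real lemma for each subsystem, and the controller is recovered by $\mathfrak{K} = S_{11}^{-1}V_1$, which is \eqref{comdiso7a3}; note $S_{11}>0$ is invertible so this is well-defined.

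Two technical points need care. First, the Laplacian eigenvalues $\lambda_2,\dots,\lambda_N$ enter $\mathfrak{A}_i$, $\mathfrak{B}_{1i}$, $\mathfrak{D}_{1i}$ only affinely, and the blocks that couple to $\lambda_i$ appear in a way that — after the $T$-transformation — makes the worst case occur at the extreme eigenvalues; I would argue (as is standard, cf. \cite{li2014cooperative}) that it suffices to check \eqref{comdiso7a2} at $i=2$ and $i=N$ by a convexity/linearity-in-$\lambda_i$ argument, which is why the theorem only demands the LMI for $i=2,N$. Second, one must confirm that the single matrix $S$ (equivalently, the common $S_{11},S_{22},V_1$) works simultaneously for all subsystems and that the recovered $\mathfrak{K}$, being the \emph{same} for all $i$, is consistent with the block structure $\mathfrak{K} = \begin{bmatrix} A_c & B_c \\ C_c & D_c\end{bmatrix}$ of \eqref{comdiso6} — this is automatic since $\mathfrak{K}$ does not depend on $i$. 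The main obstacle, and the step I expect to require the most care, is justifying rigorously that the block-diagonal restriction on $S$ does not destroy sufficiency (it is a genuine conservatism, so the theorem is correctly stated as ``if'' rather than ``if and only if'') and that the transformation $T$ together with the structural constraint $V = \begin{bmatrix} V_1 \\ 0\end{bmatrix}$ is compatible — i.e., that the zero pattern forced on $S\mathfrak{B}_2\mathfrak{K}$ is exactly $\begin{bmatrix} * \\ 0\end{bmatrix}$ and matches the ansatz; everything else is bookkeeping with Schur complements and the invariance of $H_2/H_\infty$ norms under similarity and unitary transformations.
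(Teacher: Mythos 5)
Your proposal is correct and follows essentially the same route as the paper's proof: reduction to the decoupled subsystems via Theorem \ref{hinfc}, the bounded real lemma plus a Schur complement, the congruence/similarity transformation induced by $T\mathfrak{B}_2=\bigl[\begin{smallmatrix} I\\ 0\end{smallmatrix}\bigr]$, the block-diagonal restriction on $S$ with the substitution $V=S\bar{\mathfrak{B}}_2\mathfrak{K}$, and the affinity-in-$\lambda_i$ argument reducing the check to $i=2,N$. Your phrasing of that last step (the LMI is affine in $\lambda_i\in[\lambda_2,\lambda_N]$ with the \emph{same} decision variables, so endpoint feasibility implies feasibility throughout) is in fact cleaner than the paper's wording about convex combinations of variables.
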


\begin{proof}
In virtue of Theorem \ref{hinfc} and Lemma \ref{brl}, it follows that the $N-1$ subsystems in \eqref{neticdcd} are internally stable and $\|\tilde{T}_{\tilde{w}_{i} \tilde{z}_i}\|_\infty<\gamma_\infty$ if and only if there exist matrices $S_i>0$  such that
\begin{equation}\label{comdiso7ap1}
\begin{aligned}
&	(\mathfrak{A}_i+\mathfrak{B}_2\mathfrak{K}\mathfrak{C}_2)^T S_i + S_i (\mathfrak{A}_i+\mathfrak{B}_2\mathfrak{K}\mathfrak{C}_2)\\
&\quad + \frac{1}{\gamma_\infty^2} S_i (\mathfrak{B}_{1i}+\mathfrak{B}_2\mathfrak{K} \mathfrak{D}_{1i}) (\mathfrak{B}_{1i}+\mathfrak{B}_2\mathfrak{K} \mathfrak{D}_{1i})^T S_i \\
&\quad  + \mathfrak{C}_{1}^T \mathfrak{C}_{1}<0, \quad i=2,\cdots,N.
\end{aligned}
\end{equation}
Following the steps in \cite[Theorem 2]{Jiayingmin2009acc}, by using a Schur complement, the above inequalities \eqref{comdiso7ap1} are equivalent to
\begin{equation} \label{midstep}
\begin{bmatrix}
\Phi_{1i} & \Phi_{2i} \\
\Phi_{2i}^T & -\gamma_\infty^2 I
\end{bmatrix}<0, \quad i=2,\cdots,N
\end{equation}
with
\begin{align*}
\Phi_{1i} & = (\mathfrak{A}_i+\mathfrak{B}_2\mathfrak{K}\mathfrak{C}_2)^T S_i + S_i (\mathfrak{A}_i+\mathfrak{B}_2\mathfrak{K}\mathfrak{C}_2)+ \mathfrak{C}_{1}^T \mathfrak{C}_{1},\\
\Phi_{2i} &= S_i (\mathfrak{B}_{1i}+\mathfrak{B}_2\mathfrak{K} \mathfrak{D}_{1i}).
\end{align*}
Since the matrix $B_2$ is of full column rank, there exists a matrix $T$ such that  $$T\mathfrak{B}_2=\begin{bmatrix} I \\ 0\end{bmatrix}.$$
 By pre-multiplying
$
 \bar{T} = \begin{bmatrix}
T^{-T} & 0 \\ 0 & I
 \end{bmatrix}
$ and post-multiplying $\bar{T}^T$
on \eqref{midstep}, it follows that \eqref{midstep} holds if and only if
\begin{equation} \label{midstep2}
	\begin{bmatrix}
		\bar{\Phi}_{1i} & \bar{\Phi}_{2i} \\
		\bar{\Phi}_{2i}^T & -\gamma_\infty^2 I
	\end{bmatrix}<0,
 \quad i= 2,\ldots,N,
\end{equation}
where
\begin{align*}
	& \bar{\Phi}_{1i} =  \bar{\mathfrak{A}}_{i}^T \bar{S}_i + \bar{S}_i \bar{\mathfrak{A}}_{i} +   \bar{S}_i \bar{\mathfrak{B}}_{2} \mathfrak{K} \bar{\mathfrak{C}}_{2} + (\bar{S}_i \bar{\mathfrak{B}}_{2} \mathfrak{K} \bar{\mathfrak{C}}_{2})^T +\bar{\mathfrak{C}}_{1}^T \bar{\mathfrak{C}}_{1}, \\
& \bar{\Phi}_{2i}= \bar{S}_i \bar{\mathfrak{B}}_{1i} + \bar{S}_i \bar{\mathfrak{B}}_{2}  \mathfrak{K} {\mathfrak{D}}_{1i},\
 \bar{S}_i = T^{-T} S_i T^{-1}, \\
 &  \bar{\mathfrak{B}}_{2} = T  \mathfrak{B}_{2} =\begin{bmatrix} I & 0\end{bmatrix}^T.
\end{align*}

Now, let $S = \bar{S}_i$ and $\bar{S}_i \bar{\mathfrak{B}}_{2}  \mathfrak{K} = V$. Recall that \eqref{SandV}, then \eqref{midstep2} holds if \eqref{comdiso7a2} holds for $i=2, \ldots,N$. In this case, due to
\begin{equation*}
\begin{bmatrix}
	S_{11} & 0 \\
	0 & S_{22}
\end{bmatrix}
\begin{bmatrix}
	 I \\ 0
 \end{bmatrix}\mathfrak{K}
=
\begin{bmatrix}
	V_1 \\
	0
\end{bmatrix},
\end{equation*}
the system matrix is given by \eqref{comdiso7a3}.

Finally, note that the inequalities in \eqref{comdiso7a2}  are linear matrix inequalities with respect to the unknown variables, we need to check only the two LMIs in \eqref{comdiso7a2} for $i=2$ and $N$, and the other $N-3$ LMIs in \eqref{comdiso7a2} corresponding to $i=3,\cdots,N-1$, also hold, with their variables chosen to be some convex combinations of those satisfying the two LMIs in \eqref{comdiso7a2} for $i=2,N$. This completes the proof.

\end{proof}

\begin{remark}
In the novel control structure in Fig. 1, the design of $H_2$ consensus control of the outer loop is independent of the control $Q(s)$ in \eqref{comdiso6} of the inner loop. The extra control $Q(s)$ relies on the residual signal $f$, which is the stacked error between the actual relative outputs of neighboring agents  $\sum_{j=1}^Na_{ij}(y_{i}-y_{j})$ and their observed ones $C_2v_i$ given by the distributed observer \eqref{comdiso1a}. The extra control action of the inner loop, activated by residual signal $f$ in the presence of external disturbances or uncertainties $w_i$, will complement the $H_2$ performance by providing { $H_\infty$ robustness} guarantee with respect to $w_i$. This is the reason why this approach is called a complementary design approach.
\end{remark}

\begin{remark}
The two-step complementary  approach proposed in this section, compared to the trade-off approach, has at least two main advantages:

\begin{itemize}
\item[i)] The extra control $Q(s)$ in \eqref{comdiso6} provides an additional degree of freedom for design. Therefore, the current complementary approach has two separate control inputs to deal with the $H_2$ performance and the $H_\infty$ robustness of consensus, respectively, thereby does not need to make {much} trade-off, and can be expected to be much less conservative,while the trade-off design  has only one control to tackle two conflicting performances simultaneously.
{ Although} Theorems \ref{h2cdesign} and \ref{thhinfo} in this section are conservative, however, it should be pointed out that the conservatism of these results is not caused by the complementary approach. On the contrary, it in fact highlights the difficulty of distributed control.

\item[ii)]  The  control action of the inner loop is proportional to the residual signal which quantifies the modeling mismatch level, thereby having some online ``adaptivity" with respect to modeling errors.
This complementary approach will yield   the same achievable $H_2$ performance when modeling mismatches do not exist, because in this case the inner loop will be de-activated.  By contrast, the trade-off approach always considers the {\it a priori} worst case and produce the same conservative performance even when disturbances or uncertainties disappear.
\end{itemize}
\end{remark}

\begin{remark}
It should be mentioned that the order of  the overall control law designed by the complementary approach is higher than the trade-off approach, since both distributed protocols \eqref{comdiso1a} and  \eqref{comdiso6} are required in the former approach while only one dynamic controller is needed in the latter.  This is the price to provide more degree of design freedom and it is actually not a big issue considering the abundance of cheep storage and computing resources.
Besides, the extra control action of  the inner loop, providing robustness guarantee, will inject some  white noises into the closed-loop network dynamics, and thereby will make certain compromise of the $H_2$ performance.
\end{remark}

\subsection{ Absolute Output Feedback Case}

In this subsection, we consider the case where  {\em absolute output} information of each  agent is available.  In this case, we  adopt local observers for the agents, instead of the distributed observers as in the previous subsection.
The structure of the complementary design  approach  in  this case is depicted in Fig.  \ref{fig3}.

\begin{figure}[htbp]
\centering
\includegraphics[width=0.8\columnwidth]{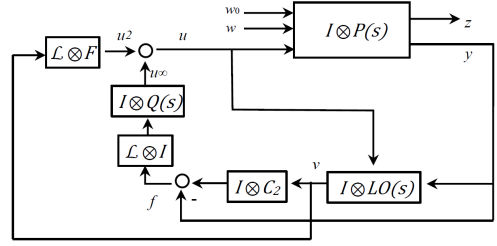}
\caption{The controller structure of the complementary  approach based on absolute outputs, where $LO(s)$ represents the local observer for each agent, $\tilde{f}$ is the residual signal,  and $\tilde{Q}(s)$ is the additional  controller based on $\tilde{f}$, and the rest of variables are defined as in Fig. \ref{fig2}.}\label{fig3}
\end{figure}

\subsubsection{Step One}
Based on the absolute output $y_i$,  we propose for each agent the following Luenberger observer:
\begin{equation}\label{locob}
\dot{\breve{v}}_i=A\breve{v}_i+Bu_i+\breve{G}(y_i-C_2\breve{v}_i),
\end{equation}
where $\breve{G}$ is the observer gain  to be designed. In the first step, we consider only the outer loop,  therefore, $u_i=u_{2i}$. For the  $H_2$  consensus problem, we design the  following protocol:
\begin{equation}\label{locob1}
u_{2i}= \breve{F} \sum_{j=1}^Na_{ij}(\breve{v}_i-\breve{v}_j),
\end{equation}
where $\breve{F}$  is the feedback gain to be designed.
Denote $\breve{v}_i=[\breve{v}_1^T,\cdots,\breve{v}_N^T]^T$ and $\breve{\xi}=\begin{bmatrix} x^T &\breve{v}^T\end{bmatrix}^T$. The closed-loop network dynamics in this case can be written in compact form as
\begin{equation}\label{netcdloc1}
\begin{aligned}
    \dot{\breve{\xi}}&=\begin{bmatrix} I\otimes  A  &  \mathcal {L}\otimes B_2 \breve{F} \\ I\otimes \breve{G}C_2 & I\otimes(A-\breve{G}C_2)+\mathcal {L}\otimes B_2 \breve{F} \end{bmatrix}\breve{\xi} + \begin{bmatrix} I\otimes B_0 \\  I\otimes\breve{G}D_0\end{bmatrix}w_0, \\
    z & = \mathcal {M}\otimes \begin{bmatrix} C_1 & 0\end{bmatrix}\breve{\xi}.
\end{aligned}
\end{equation}

It is easy  to verify that the $H_2$ consensus problem of \eqref{netcdloc1} can be reduced to the same condition as in Theorem \ref{h2c}.  Then, Theorem \ref{h2cdesign} can also be  used  to design the protocol \eqref{locob1}.

\subsubsection{Step Two}

In the second step, we define the residual signals $\tilde{f}_i$ as follows:
\begin{equation}\label{comdiso5x}
\begin{aligned}
\tilde{f}_i &\triangleq  C_2\breve{v}_i-y_{i},~ i=1,\cdots,N,
\end{aligned}
\end{equation}
which is actually the local estimated output error.
Therefore, $\tilde{f}_i$ quantifies the difference between the actual plant and the ideal plant, and we can see that $\tilde{f}_i=0$, if there exist no disturbances or uncertainties.

Since $\sum_{j=1}^Na_{ij}(x_i-x_j)$ is the consensus error and $\breve{v}_i$ is the estimate of $x_i$  for agent $i$, we can see that $\sum_{j=1}^Na_{ij}(\tilde{f}_i-\tilde{f}_j)$ denotes the estimated output error of the consensus error. Therefore, in this case we design the control input $u_{\infty i}$ for the inner loop based on $\sum_{j=1}^Na_{ij}(\tilde{f}_i-\tilde{f}_j)$, instead of $\tilde{f}_i$ as in the previous subsection. Specifically, we consider  a distributed protocol of the form
\begin{equation}\label{comdiso_inf}
	\begin{aligned}
		\dot{n}_i & =  \breve{A}_cn_i+ \breve{B}_c \sum_{j=1}^Na_{ij}(\tilde{f}_i-\tilde{f}_j), \\
		u_{\infty i} & =  \breve{C}_cn_i+\breve{D}_c \sum_{j=1}^Na_{ij}(\tilde{f}_i-\tilde{f}_j),
	\end{aligned}
\end{equation}
where $n_i\in\mathbf{R}^{n}$ is the state of the protocol, {and $\breve{A}_c$, $\breve{B}_c$, $\breve{C}_c$, $\breve{D}_c$ are protocol matrices to be designed}.  In this case,  ${Q}(s)=\left[  \begin{array}{c|c}\breve{A}_c & \breve{B}_c\\ \hline \breve{C}_c & \breve{D}_c\end{array}\right]$ in Fig. \ref{fig3}.

Let $\breve{e}_i = \breve{v}_i-x_i$. Denote $\breve{e}=[\breve{e}_1^T,\cdots,\breve{e}_N^T]^T$,  $\breve{\zeta}=\begin{bmatrix} x^T & \breve{e}_i^T\end{bmatrix}^T$  and
	  $\breve{n}=[\breve{n}_1^T,\cdots,\breve{n}_N^T]^T$. Then, it  follows from \eqref{d1}, \eqref{locob1} and \eqref{comdiso_inf} that the closed-loop network dynamics are given by
\begin{equation}\label{comdisoloc7}
\begin{aligned}
    \begin{bmatrix}\dot{\breve{\zeta}} \\  \dot{\breve{n}}\end{bmatrix}&=\begin{bmatrix}\breve{\mathscr{A}}+\mathscr{B}_2 \breve{\mathscr{D}}_c\mathscr{C}_2 & \mathscr{B}_2 \breve{\mathscr{C}}_c\\
     \breve{\mathscr{B}}_c\mathscr{C}_2 & \breve{\mathscr{A}}_c\end{bmatrix}\begin{bmatrix} \breve{\zeta}\\ \breve{n}\end{bmatrix}
    +\begin{bmatrix} \breve{\mathscr{B}}_1- \mathscr{B}_2 \breve{\mathscr{D}}_c\breve{\mathscr{D}}_1\\ - \breve{\mathscr{B}}_c \breve{\mathscr{D}}_1\end{bmatrix}w, \\
    z & = \begin{bmatrix}\mathscr{C}_1 & 0\end{bmatrix}\begin{bmatrix} \breve{\zeta} \\ \breve{n}\end{bmatrix},
\end{aligned}
\end{equation}
where $\mathscr{C}_1$, $\mathscr{C}_2$, and $\mathscr{B}_2$ are defined in \eqref{netcd10}, and
\begin{equation*}\label{comdisoloc70}
\begin{aligned}
\breve{\mathscr{A}} &=\begin{bmatrix} I\otimes A  +\mathcal {L}\otimes B_2 \breve{F} & \mathcal {L}\otimes B_2 \breve{F} \\ 0 & I\otimes (A- \breve{G}C_2)\end{bmatrix},\\
 \breve{\mathscr{B}}_1 &=
 \begin{bmatrix}
 	I\otimes B_1 \\
 	\mathcal {L}\otimes ( \breve{G}D_1-B_1)
 \end{bmatrix},\
\breve{\mathscr{D}}_1=  I\otimes  D_1, \
\breve{\mathscr{A}}_c  =I\otimes  \breve{A}_c,\\
\breve{\mathscr{B}}_c &=I\otimes  \breve{B}_c,~\breve{\mathscr{C}}_c =I \otimes  \breve{C}_c,~\breve{\mathscr{D}}_c =I \otimes  \breve{D}_c.
\end{aligned}
\end{equation*}

Similarly as in Theorems \ref{hinfc} and \ref{thhinfo}, the control law \eqref{comdiso_inf} can be constructed to achieve $H_\infty$ consensus with prescribed index. The details are omitted here for conciseness.

\section{Simulation Example}\label{sec_simu}
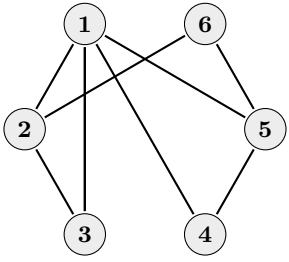
\begin{figure}[t]
	\centering
	\begin{tikzpicture}[scale=0.4]
		\tikzset{VertexStyle1/.style = {shape = circle,
				color=black,
				fill=white!93!black,
				minimum size=0.5cm,
				text = black,
				inner sep = 2pt,
				outer sep = 1pt,
				minimum size = 0.55cm},
			VertexStyle2/.style = {shape = circle,
				color=black,
				fill=black!53!white,
				minimum size=0.5cm,
				text = white,
				inner sep = 2pt,
				outer sep = 1pt,
				minimum size = 0.55cm}
		}
		\node[VertexStyle1,draw](1) at (-2,3.5) {$\bf 1$};
		\node[VertexStyle1,draw](2) at (-4,0) {$\bf 2$};
		\node[VertexStyle1,draw](3) at (-2,-3.5) {$\bf 3$};
		\node[VertexStyle1,draw](4) at (2,-3.5) {$\bf 4$};
		\node[VertexStyle1,draw](5) at (4,0) {$\bf 5$};
		\node[VertexStyle1,draw](6) at (2,3.5) {$\bf 6$};
		\Edge[ style = {-,> = latex',pos = 0.2},color=black, labelstyle={inner sep=0pt}](1)(2);
		\Edge[ style = {-,> = latex',pos = 0.2},color=black, labelstyle={inner sep=0pt}](1)(3);
		\Edge[ style = {-,> = latex',pos = 0.2},color=black, labelstyle={inner sep=0pt}](1)(4);
		\Edge[ style = {-,> = latex',pos = 0.2},color=black, labelstyle={inner sep=0pt}](1)(5);		%
		\Edge[ style = {-,> = latex',pos = 0.2},color=black, labelstyle={inner sep=0pt}](2)(3);	\Edge[ style = {-,> = latex',pos = 0.2},color=black, labelstyle={inner sep=0pt}](2)(6);
		\Edge[ style = {-,> = latex',pos = 0.2},color=black, labelstyle={inner sep=0pt}](3)(1);
		\Edge[ style = {-,> = latex',pos = 0.2},color=black, labelstyle={inner sep=0pt}](4)(5);
		\Edge[ style = {-,> = latex',pos = 0.2},color=black, labelstyle={inner sep=0pt}](5)(6);	
	\end{tikzpicture}
	\caption{The   communication graph among the six agents.}
	\label{graph}
\end{figure}

 In this section we will use a simulation example to illustrate the proposed complementary $H_2$ and $H_\infty$ design by dynamic output feedback,  as in Theorems \ref{h2cdesign} and \ref{thhinfo} in Subsection \ref{subsec_output_feedback}, for obtaining distributed protocols.
 	
Consider a multi-agent system that consists of six agents. The dynamics of each agent is given by \eqref{d1},  where
\begin{equation*}\label{agent_matrices}
	\begin{aligned}
	&	A = \begin{bmatrix}
		-2 & 2 \\-1 & 1
	\end{bmatrix},\
	B_0 = \begin{bmatrix}
	0 & 0 \\ 0.5 & 0
\end{bmatrix}, \
	B_1 = \begin{bmatrix}
		1 \\0.6
	\end{bmatrix},\	B_2 =
\begin{bmatrix}
	1 \\ -2
\end{bmatrix},\\
& 	C_1 = \begin{bmatrix}
	1 & 0
	\end{bmatrix},\
	C_2 = \begin{bmatrix}
		1 & 0.8
	\end{bmatrix}, \  D_1 = 0.1,\
	D_0 = \begin{bmatrix}
		0 & 1
	\end{bmatrix}.
\end{aligned}
\end{equation*}
The communication graph among the agents is shown in Fig. \ref{graph}, which is a connected undirected graph with the Laplacian matrix  $\mathcal{L}$.
The smallest nonzero and the largest eigenvalues of $\mathcal{L}$ are $\lambda_2 = 1.3820$ and $\lambda_N = 5.3028$.

\begin{figure}[t]
	\centering
	\includegraphics[width=\columnwidth]{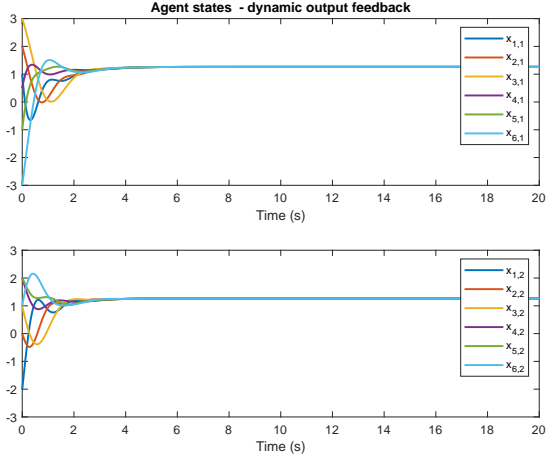}
	\caption{Complementary $H_2$ and $H_\infty$  design by output feedback: plots of the agent state vector $x^1 = (x_{1,1},\ldots, x_{6,1})$ (upper plot) and $x^2 = (x_{1,2},\ldots, x_{6,2})$ (lower plot).}
	\label{com_output_feedback_state_plots}
\end{figure}

\begin{figure}[t]
	\centering
	\includegraphics[width=\columnwidth]{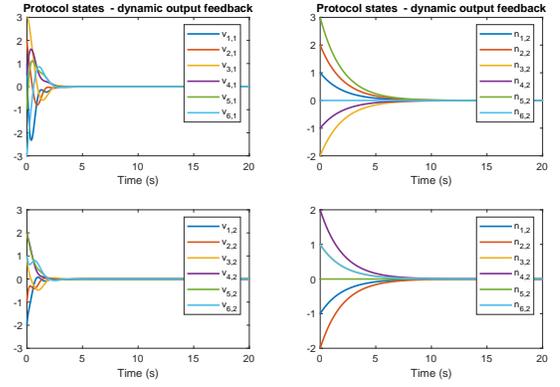}
	\caption{Complementary $H_2$ and $H_\infty$  design by output feedback: plots of the protocol state vector $v^1 = (v_{1,1},\ldots, v_{6,1})$ (upper left plot), $v^2 = (v_{1,2},\ldots, v_{6,2})$ (lower left plot), $n^1 = (n_{1,1},\ldots, n_{6,1})$ (upper right plot) and $n^2 = (n_{1,2},\ldots, n_{6,2})$ (lower right plot).}
	\label{com_output_feedback_protocol_state_plots}
\end{figure}
%

\begin{figure}[t]
	\centering
	\includegraphics[width=\columnwidth]{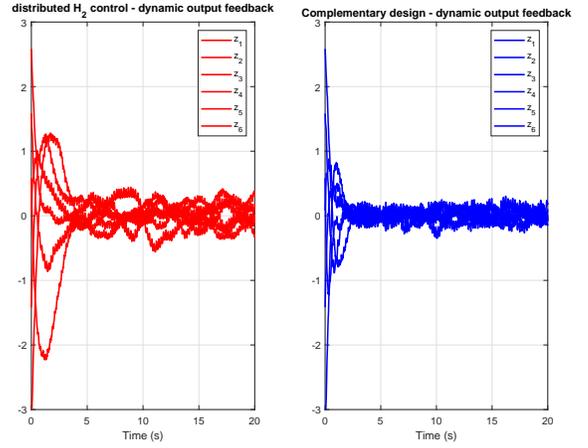}
	\caption{Complementary $H_2$ and $H_\infty$  design by output feedback: plots of the agent output vector $z^1 = (z_{1},\ldots, z_{6})$.}
	\label{com_output_feedback_output_plots_noisy}
\end{figure}

In the first step of the complementary design, following Theorem \ref{h2cdesign},  we obtain a distributed control law   that takes care of the $H_2$ performance. We choose $\gamma_2 = 2$ and compute the control gains
$
	F = \begin{bmatrix}
		-0.1627 &   0.7430
	\end{bmatrix}$ and $
	G = \begin{bmatrix}
0.5685 &
0.7966
	\end{bmatrix}^T$.
Next in the second step, following Theorem \ref{thhinfo}, we obtain a distributed control law that deals with the  $H_\infty$ consensus. We compute the   protocol gains to be
$
	A_c   =
		\begin{bmatrix}
		-0.5031 &  0 \\
		0   & -0.5031
	\end{bmatrix},$
$B_c = \begin{bmatrix}
	0 \\ 0
\end{bmatrix},$
$C_c  = \begin{bmatrix}
	0 & 0
\end{bmatrix}$ and
$D_c =
-0.4045.$

The associated computed  upper bound for the  $H_\infty$ robustness is   $\gamma_{\infty, {\rm min}} =1.5808$.
In Fig. \ref{com_output_feedback_state_plots}, we have plotted the state trajectories of the agents, and in Fig.  \ref{com_output_feedback_protocol_state_plots} we have plotted the state trajectories of the two proposed distributed protocols. It can be seen that indeed the proposed distributed protocols together achieve consensus for the multi-agent system.

As a comparison, we will next compare, in presence of the external noise and disturbance, the output performance of the proposed complementary approach with that of the distributed $H_2$ control.  In particular, we choose  the external noise $w_{0i}$  to be a uniformly distributed signal, generated by Matlab command \texttt{30*rand()}. We choose the  disturbance  $w_{i}$ to be
$
 	w_{1} =   w_{3} =  3\sin (110t),\ w_{2} = w_4 =  3\sin (30t),  w_{5} =  w_{6} = 3\sin (60t).$
The plots of the trajectories of the performance outputs $z_i$  are given in Fig. \ref{com_output_feedback_output_plots_noisy}. It can be seen that, in the presence of noises and disturbances, indeed the proposed complementary  approach guarantees a better performance than the distributed $H_2$ control.

\section{Conclusions}\label{sec_conclusions}

In this paper, we have presented a novel complementary approach  to the distributed $H_2$ and $H_\infty$ consensus problem of   multi-agent systems.
Through introducing an extra control input that depends on some carefully chosen residual signals, which indicates the modeling mismatch, the complementary approach provides an additional degree of freedom for control design and complements the $H_2$ performance of consensus by providing the $H_\infty$ robustness guarantee.
This complementary approach does not involves much trade-off, and can be expected to be much less conservative than the trade-off design.

Future works include extending the proposed complementary approach to other robust optimal cooperative control problems with different optimal performances under different types of uncertainties.

\balance

\end{document}